\pdfoutput=1
\RequirePackage{ifpdf}
\ifpdf 
\documentclass[pdftex]{sigma}
\else
\documentclass{sigma}
\fi

\usepackage{epic}
\usepackage{accents}

\numberwithin{equation}{section}

\newtheorem{Theorem}{Theorem}[section]
\newtheorem{Lemma}[Theorem]{Lemma}
\newtheorem{Proposition}[Theorem]{Proposition}
 { \theoremstyle{definition}
\newtheorem{Remark}[Theorem]{Remark} }

\def \h#1{\widehat{#1}}
\def \t#1{\widetilde{#1}}
\def \b#1{\overline{#1}}
\def \th#1{\widehat{\widetilde{#1}}}
\def \hb#1{{\overline{\widehat{#1}}}}
\def \bh#1{{\widehat{\overline{#1}}}}

\def \tb#1{\overline{\widetilde{#1}}}
\def \bt#1{\widetilde{\overline{#1}}}
\def \tt#1{\widetilde{\widetilde{#1}}}
\def \thb#1{\overline{\widehat{\widetilde{#1}}}}
\def \tthb#1{\overline{\widehat{\widetilde{\widetilde{#1}}}}}
\def \dh#1{\underaccent{\hat}{#1}}
\def \db#1{\underaccent{\bar}{#1}}
\def \dt#1{\underaccent{\tilde}{#1}}
\def \dth#1{\underaccent{\tilde}{\underaccent{\hat}{#1}}}
\def \dtb#1{\underaccent{\bar}{\underaccent{\tilde}{#1}}}
\def \dhb#1{\underaccent{\bar}{\underaccent{\hat}{#1}}}

\def \tth#1{\widehat{\widetilde{\widetilde{#1}}}}
\def \ttb#1{\bar{\widetilde{\widetilde{#1}}}}

\def \c#1{\accentset{\circ}{#1}}
\def \flo#1{{\left\lfloor{#1}\right\rfloor}}

\begin{document}

\allowdisplaybreaks

\newcommand{\arXivNumber}{1912.00713}

\renewcommand{\PaperNumber}{060}

\FirstPageHeading

\ShortArticleName{Multi-Component Extension of CAC Systems}

\ArticleName{Multi-Component Extension of CAC Systems}

\Author{Dan-Da ZHANG~$^\dag$, Peter H.~VAN DER KAMP~$^\ddag$ and Da-Jun ZHANG~$^\S$}

\AuthorNameForHeading{D.-D.~Zhang, P.H.~van der Kamp and D.-J.~Zhang}

\Address{$^\dag$~School of Mathematics and Statistics, Ningbo University, Ningbo 315211, P.R.~China}
\EmailD{\href{mailto:zhangdanda@nbu.edu.cn}{zhangdanda@nbu.edu.cn}}

\Address{$^\ddag$~Department of Mathematics and Statistics, La Trobe University, Victoria 3086, Australia}
\EmailD{\href{mailto:P.vanderKamp@LaTrobe.edu.au}{P.vanderKamp@LaTrobe.edu.au}}

\Address{$^\S$~Department of Mathematics, Shanghai University, Shanghai 200444, P.R.~China}
\EmailD{\href{mailto:djzhang@staff.shu.edu.cn}{djzhang@staff.shu.edu.cn}}

\ArticleDates{Received December 07, 2019, in final form June 14, 2020; Published online July 01, 2020}

\Abstract{In this paper an approach to generate multi-dimensionally consistent $N$-com\-ponent systems is proposed. The approach starts from scalar multi-dimensionally consistent quadrilateral systems and makes use of the cyclic group. The obtained $N$-component systems inherit integrable features such as B\"acklund transformations and Lax pairs, and exhibit interesting aspects, such as nonlocal reductions. Higher order single component lattice equations (on larger stencils) and multi-component discrete Painlev\'e equations can also be derived in the context, and the approach extends to $N$-component generalizations of higher dimensional lattice equations.}

\Keywords{lattice equations; consistency around the cube; cyclic group; multi-component; Lax pair; B\"acklund transformation; nonlocal}

\Classification{37K60}

\section{Introduction}\label{sec-1}
Integrability of nonlinear partial differential equations is of sovereign
importance in the study of soliton theory.
For discrete equations, especially quadrilateral equations, consistency
around the cube (CAC) \cite{BS,Nij02, NW01} provides an interpretation
of integrability. A classification of discrete integrable equations with scalar-valued fields was obtained by Adler--Bobenko--Suris (ABS) in~\cite{ABS03} and more general classes of scalar equations where classified in \cite{ABS09,RB11}. The CAC property is also applicable to higher order lattice equations by introducing suitable multi-component forms \cite{Hietarinta-JPA-DBSQ,TN-2005,Walker-PhD}. Further examples of multi-component integrable systems can be found in the literature \cite{KNPT19,KNPT20,Kels, PT}. However, general classification results have not been obtained yet.

For CAC equations, the equations on the side faces of the consistent cube can be interpreted as an auto-B\"acklund transformation (BT).
CAC also enables one to construct Lax pairs \cite{BHQK12,Nij02}, as well as to find soliton solutions \cite{HZ09,HZ10,HZ11}.
Whereas the classification in \cite{ABS03} requires the equations on the six faces of the cube to be the same, in \cite{Atk08} alternative auto-BTs were given for several ABS equations, giving rise to consistent systems where the equations on the side faces are different from the equation on the top and bottom of the cube. Moreover, (non-auto) BTs between distinct equations were also provided, corresponding to consistent systems with different equations on the top and the bottom faces of the cube.
Other classifications of CAC systems with asymmetrical properties, other relaxations, and 3D affine linear lattice equations with 4D consistency have been also considered \cite{ABS09,ABS-IMRN-2011,RB11,Hietarinta-JNMP-2019,LMP-3D-2008}.
We will refer to a system of equations which is consistent on a cube, as a cube system.

In Appendix~B of the PhD.~Thesis of J.~Atkinson \cite{Jat}, multi-component versions of CAC scalar systems were introduced under the name ``The trivial Toeplitz extension''. Atkinson trivially extends a scalar equation for a field $u$ to an $N$-component system for fields $u_1,\ldots,u_N$, and then
applies the transformation
\[
u_i(l,m)\rightarrow u_{i+l+m \ {\rm mod} \ N}(l,m).
\]
He \looseness=-1 remarks that the resulting system is trivial (indeed the inverse transformation decouples it), and that extensions of multi-dimensional consistent scalar equations are multi-dimensional consistent (and hence would emerge in classifications of multi-component discrete integrable systems).

In \cite{FZ14}, $(N=2)$-component ABS equations resulting from such an extension were investigated. Here the CAC property (with affine linearity, D4 symmetry and the tetrahedron property) of these coupled systems was established, and solutions provided.

\looseness=-1 In this paper, we consider more general (but still trivial) extensions of not only scalar equations, but also of cube systems. For scalar equations these extension take the (non-Toeplitz) form
\[
u_i(l,m)\rightarrow u_{i+al+bm\ {\rm mod} \ N}(l,m), \qquad \text{with}\quad a,b\in \mathbb{Z}.
\]
We will provide Lax pairs, BTs, solutions and reductions for multi-component extensions of CAC lattice equations and cube systems, as well as multi-component extensions of 3D lattice equations with 4D consistency.

\looseness=-1 The paper is organized as follows. In Section~\ref{sec-2} we construct multi-component extensions of two distinguished kinds of systems, CAC {\em cube systems} (Section~\ref{sec-2-1}) and CAC {\em lattice systems} (quadrilateral equations in Section~\ref{sec-2-2}, and higher dimensional equations in Section~\ref{sec-6}). CAC lattice systems can be consistently posed on a lattice, whereas CAC cube systems need to be accompanied by reflected cube systems and posed on lattices similar to so-called black and white lattices~\cite{ABS09, XP}. Examples include multi-component extensions of a Boll cube system~\cite{RB11} (Section~\ref{sec-2-1}), an equation from the ABS list \cite{ABS03}, and (auto and non-auto) B\"acklund transformations~\cite{Atk08} (Section~\ref{sec-2-4}). We show that multi-component extensions admit Lax-pairs in Section~\ref{slax}. Assuming D4 symmetry we count the number of different non-decoupled $N$-component extensions in Section~\ref{sec-2-3}. In Section~\ref{sec-4} we provide several kinds of reductions. Nonlocal reductions are given in Section~\ref{sec-4-1}, reductions to higher order scalar equations are given in Section~\ref{sec-4-2}, and a reduction to a multi-component Painlev\'e lattice equation is considered in Section~\ref{sec-4-3}. In Section~\ref{sec-5-1} some particular solutions are given. These can be constructed from $N$ solutions of the scalar equation. A solution for a nonlocal equation is provided in Section~\ref{sec-5-3}. In Section~\ref{sec-7} we summarize and discuss the results, and we point out that particular examples of $N$-component generalised systems have appeared in the literature in different contexts.

\section{Multi-component extension of CAC systems}\label{sec-2}
In this section we construct multi-component systems that are consistent around the cube, a.k.a.~CAC. We first focus on a single 3D cube with consistent face equations.

\subsection{Multi-component CAC cube systems}\label{sec-2-1}
We will be concerned with quadrilateral equations of the form
\begin{gather}\label{Q}
Q\big(u,\t u,\h u,\th u\big)=0,
\end{gather}
where we use $\t u$ and $\h u$ to denote shifts of $u$ in two different directions. Posing six such equations on a cube yields a general type of system of the form
\begin{subequations}\label{consist-1}
\begin{alignat}{3}
& Q\big(u,\t u,\h u, \th u\big)=0,\qquad && Q^*\big(\b u,\tb u,\hb u, \thb u\big)=0,& \\
& A\big(u,\t u,\b u, \tb u\big)=0,\qquad &&A^*\big(\h u,\th u,\hb u, \thb u\big)=0,&\\
& B\big(u,\h u,\b u,\hb u\big)=0,\qquad && B^*\big(\t u,\th u,\tb u,\thb u\big)=0.&
\end{alignat}
\end{subequations}
We assume the functions $Q$, $A$, $B$, $Q^*$, $A^*$, $B^*$ are affine linear with respect to each variable, and the symbols $\t{u}, \h{u}, \b{u}, \dots, \thb{u}$ represent the values of the field at the vertices of the cube, see Fig.~\ref{F:1}(a).
Each equation may depend on additional (edge) parameters but we omit these.

The system \eqref{consist-1} is called CAC if the three values for $\thb u$ calculated from the three starred equations coincide for arbitrary initial data $u$, $\t u$, $\h u$, $\b u$, i.e.,
\begin{gather}
\thb u=F\big(u,\t u,\h u,\b u\big).
\label{F}
\end{gather}

In order to get a multi-component extension of the system~\eqref{consist-1}, we consider the vertex symbols $u$, $\t u$, $\h u$, $\th u$, $\tb u$, $\hb u$, $\thb u$ to be $N\times N$ diagonal matrices, e.g.,
\begin{gather}\label{u-form}
 u=\operatorname{Diag}(u_1,u_2,\dots,u_N),\qquad \t u=\operatorname{Diag}(\t u_1,\t u_2,\dots,\t u_N).
\end{gather}
We introduce a cyclic group using the generator $\sigma=\sigma_N$, defined as the $N\times N$ matrix with elements given by
\begin{gather}\label{sigma}
 (\sigma_N)_{i,j} = \begin{cases}1, & i+1 \equiv j \ \text{mod} \ N,\\0, & \text{otherwise}.
 \end{cases}
\end{gather}
Thus, a cyclic transformation of $u$ (a permutation of the components on the diagonal) can be denoted by
\begin{gather}\label{T}
u \mapsto T_k u =\sigma^k u \sigma^{-k},\qquad k\in \mathbb{Z} \ ({\rm mod} \ N).
\end{gather}
Note that $\sigma^N=I_N$ which is the $N\times N$ identity matrix.

\begin{Lemma}\label{T0}
If the scalar cube system \eqref{consist-1} is CAC, the following multi-component cube system~
\begin{subequations}\label{consist-2}
\begin{alignat}{3}
&Q\big(u, T_{k_1} \t u, T_{k_2} \h u, T_{k_4}\th u \big)=0,\qquad && Q^*\big(T_{k_3} \b u, T_{k_5} \tb u, T_{k_6} \hb u, T_{k_7} \thb u \big)=0,&\\
&A\big(u, T_{k_1} \t u, T_{k_3} \b u, T_{k_5} \tb u \big)=0,\qquad && A^*\big(T_{k_2} \h u, T_{k_4}\th u, T_{k_6} \hb u, T_{k_7} \thb u \big)=0,& \label{consist-2b}\\
&B\big(u, T_{k_3} \b u, T_{k_2} \h u, T_{k_6} \hb u \big)=0,\qquad && B^*\big(T_{k_1} \t u, T_{k_5}\tb u, T_{k_4} \th u, T_{k_7} \thb u \big)=0& \label{consist-2c}
\end{alignat}
\end{subequations}
is CAC as well, where the variables $u$, $\t u$, $\h u$, $\th u$, $\tb u$, $\hb u$, $\thb u$ are diagonal matrices as in \eqref{u-form},
and the~$T_{k_i}$ are cyclic transformations as defined in~\eqref{T}, with $k_i\in \mathbb{Z}$ $({\rm mod}~N)$.
\end{Lemma}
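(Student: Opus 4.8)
The plan is to exploit the diagonal structure of the matrices together with the affine linearity of the face functions, thereby reducing the matrix system \eqref{consist-2} to $N$ decoupled copies of the scalar CAC system \eqref{consist-1}. First I would record the action of the cyclic transformation on a diagonal matrix. From the definition \eqref{sigma} one has $\sigma e_j=e_{j-1}$ (indices mod $N$), whence $T_k u=\sigma^k u\sigma^{-k}=\operatorname{Diag}(u_{1+k},u_{2+k},\dots,u_{N+k})$; that is, $T_k$ merely relabels the diagonal entries cyclically, $(T_k u)_{ii}=u_{i+k}$ (mod $N$). Since $Q$, $A$, $B$, $Q^*$, $A^*$, $B^*$ are affine linear in each slot with scalar coefficients, evaluating any of them on diagonal matrices returns a diagonal matrix whose $i$-th entry is the same function applied to the $i$-th diagonal entries of its arguments. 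Hence each of the six matrix equations in \eqref{consist-2} is equivalent to its $N$ scalar diagonal components.

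Next I would write out the $i$-th component of all six equations. The \emph{crux} of the argument is the structural observation that in \eqref{consist-2} each of the vertex variables is acted on by one and the same transformation in every face equation in which it occurs: $\t u$ always carries $T_{k_1}$, $\h u$ always $T_{k_2}$, $\b u$ always $T_{k_3}$, $\th u$ always $T_{k_4}$, $\tb u$ always $T_{k_5}$, $\hb u$ always $T_{k_6}$, and $\thb u$ always $T_{k_7}$, while $u$ is unshifted. Therefore, upon introducing the relabeled scalars
\[
v=u_i,\quad \t v=\t u_{i+k_1},\quad \h v=\h u_{i+k_2},\quad \b v=\b u_{i+k_3},
\]
\[
\th v=\th u_{i+k_4},\quad \tb v=\tb u_{i+k_5},\quad \hb v=\hb u_{i+k_6},\quad \thb v=\thb u_{i+k_7},
\]
the six $i$-th-component equations become precisely the six scalar equations of \eqref{consist-1} in the variables $v,\t v,\dots,\thb v$.

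Finally I would conclude. Because each shift $k_j$ acts bijectively on $\mathbb{Z}/N\mathbb{Z}$, arbitrary diagonal initial data $u$, $\t u$, $\h u$, $\b u$ correspond, for each $i$, to arbitrary scalar initial data $v$, $\t v$, $\h v$, $\b v$. The scalar CAC hypothesis then guarantees that the three values of $\thb v=\thb u_{i+k_7}$ computed from the three starred component equations coincide and equal $F(v,\t v,\h v,\b v)$ as in \eqref{F}. As this holds for every $i$, the three matrix values of $\thb u$ coincide, so \eqref{consist-2} is CAC. I expect the main obstacle to be not computational but the bookkeeping in the previous step: one must verify that the particular assignment of transformations to vertices in \eqref{consist-2} is coherent across all faces, since it is exactly this coherence that allows the $N$ component systems to close up as independent copies of the scalar system; an inconsistent assignment would couple the components and destroy the reduction.
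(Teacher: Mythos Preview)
Your proof is correct and follows essentially the same idea as the paper's: the assignment of a fixed $T_{k_i}$ to each vertex amounts to a mere relabeling of the vertex variables, and since the face functions are affine linear in commuting (diagonal) arguments, the CAC property transfers verbatim, yielding $T_{k_7}\thb u=F(u,T_{k_1}\t u,T_{k_2}\h u,T_{k_3}\b u)$. The paper states this relabeling argument in two sentences, whereas you unpack it componentwise; the content is the same.
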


\begin{proof}Note that the equations in the system \eqref{consist-1} are affine linear and the variables denote the values of fields at the vertices of
the cube. Replacing these variables by diagonal matrices they remain commutative. If the system~\eqref{consist-1} is CAC and $\thb u$ has a unique expression~\eqref{F}, it then follows that the system~\eqref{consist-2} is CAC in the sense that $\thb u$ has a unique expression
\begin{gather*}
T_{k_7}\thb u =F\big(u,T_{k_1} \t u,T_{k_2} \h u, T_{k_3} \b u\big)
\end{gather*}
in which the function $F$ is the same as in equation \eqref{F}, and the fields at the vertices are relabelled.
\end{proof}

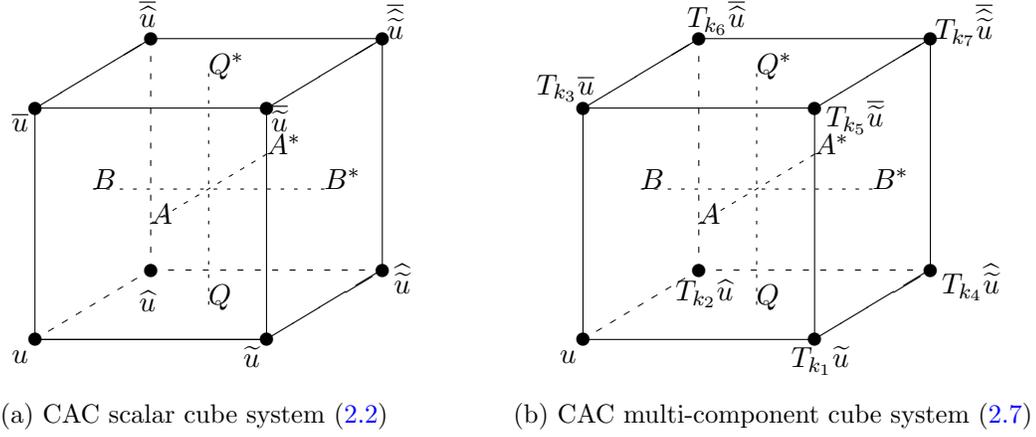
\begin{figure}[ht]
\setlength{\unitlength}{0.08em}
\hskip 2.0cm
\begin{picture}(200,170)(10,-20)
 \put(100, 0){\circle*{6}} \put(0 ,100){\circle*{6}}
 \put( 50, 30){\circle*{6}} \put(150,130){\circle*{6}}
 \put( 0, 0){\circle*{6}} \put(100,100){\circle*{6}}
 \put( 50,130){\circle*{6}} \put(150, 30){\circle*{6}}
 \put( 0, 0){\line(1,0){100}}
 \put( 0,100){\line(1,0){100}}
 \put(50,130){\line(1,0){100}}
 \put( 0, 0){\line(0,1){100}}
 \put(100, 0){\line(0,1){100}}
 \put(150,30){\line(0,1){100}}
 \put( 0,100){\line(5,3){50}}
 \put(100,100){\line(5,3){50}}
 \put(100, 0){\line(5,3){50}}
 \put( 0, 0){\line(1,0){100}}
 \dashline{3}(50,30)(0,0)
 \dashline{3}(50,30)(150,30)
\dashline{3}(50,30)(50,130)
 \dashline{2}(25, 65)(125, 65)
 \dashline{2}(75, 15)(75, 115)
 \dashline{2}(50, 50)(100, 80)
 \put(25, 65){$B$}
 \put(125, 65){$B^*$}
 \put(75, 15){$Q$}
 \put(75, 115){$Q^*$}
 \put(50, 50){$A$}
 \put(100, 80){$A^*$}
 \put(-10,-10){$u$}
 \put(90,-12){$\t u $}
 \put(45,12){$\h u$}
 \put(-10,90){$\b u$}
 \put(155,20){$\th u$}
 \put(45,135){$\hb u$}
 \put(102,90){$\tb u$}
 \put(152,130){$\thb u$}
 \put(-15,-36){\small (a) CAC scalar cube system \eqref{consist-1}}
\end{picture}
\hskip 1.0cm
\begin{picture}(200,170)(10,-20)
 \put(100, 0){\circle*{6}} \put(0 ,100){\circle*{6}}
 \put( 50, 30){\circle*{6}} \put(150,130){\circle*{6}}
 \put( 0, 0){\circle*{6}} \put(100,100){\circle*{6}}
 \put( 50,130){\circle*{6}} \put(150, 30){\circle*{6}}
 \put( 0, 0){\line(1,0){100}}
 \put( 0,100){\line(1,0){100}}
 \put(50,130){\line(1,0){100}}
 \put( 0, 0){\line(0,1){100}}
 \put(100, 0){\line(0,1){100}}
 \put(150,30){\line(0,1){100}}
 \put( 0,100){\line(5,3){50}}
 \put(100,100){\line(5,3){50}}
 \put(100, 0){\line(5,3){50}}
 \dashline{3}(50,30)(0,0)
 \dashline{3}(50,30)(150,30)
\dashline{3}(50,30)(50,130)
 \dashline{2}(25, 65)(125, 65)
 \dashline{2}(75, 15)(75, 115)
 \dashline{2}(50, 50)(100, 80)
 \put(25, 65){$B$}
 \put(125, 65){$B^*$}
 \put(75, 15){$Q$}
 \put(75, 115){$Q^*$}
 \put(50, 50){$A$}
 \put(100, 80){$A^*$}
 \put(-10,-10){$u$}
 \put(90,-12){$T_{k_1} \t u $}
 \put(40,17){$T_{k_2} \h u $}
 \put(-20,106){$T_{k_3} \b u $}
 \put(155,20){$T_{k_4} \th u $}
 \put(45,135){$T_{k_6} \hb u $}
 \put(105,92){$T_{k_5} \tb u $}
 \put(152,130){$T_{k_7} \thb u $}
 \put(-30,-36){\small (b) CAC multi-component cube system \eqref{consist-2}}
\end{picture}
\vspace{5mm}

\caption{Single cubes with equations defined on their faces.}\label{F:1}
\end{figure}

In Fig.~\ref{F:1} the cube on the right has equation $A\big(u, T_{k_1} \t u, T_{k_3} \b u, T_{k_5} \tb u \big)=0$ on its front face, which can be thought of in two distinct ways: as a relabeling of the variable names (which we do in the proof), or, as introducing coupling between different components of the fields at the vertices (which yield multi-component coupled systems of equations).

As an example we consider a cube system of Boll
\cite[equations~(3.31), (3.32)]{RB11}. With $N=2$, denoting the field
components by $u$, $v$ (instead of $u_1$, $u_2$), and taking
$k_i=\frac{1}{2}(1-(-1)^i)$ we find the following 2-component cube system
(written as vector system instead of as a matrix system):
\begin{gather*}
Q=\begin{pmatrix}
\h{u} \th{u} \delta_{{1}}+\h{u} \t{v} \delta_{{2}}+u\h{u}+\t{v} \th{u}\\
\t{u} \h{v} \delta_{{2}}+\h{v} \th{v} \delta_{{1}}+\t{u} \th{v}+v\h{v}
\end{pmatrix}=\begin{pmatrix}0\\0\end{pmatrix},\\
A=\begin{pmatrix}
\alpha \b{v} \tb{v} \delta_{{1}}+\b{v} \t{v}
\delta_{{2}}+u\b{v}+\t{v} \tb{v}\\
\alpha \b{u} \tb{u} \delta_{{1}}+\b{u} \t{u} \delta_{{2}}+v\b{u}+\t{u} \tb{u}
\end{pmatrix}=\begin{pmatrix}0\\0\end{pmatrix},\\
B=\begin{pmatrix}
u\b{v}+\h{u} \hb{u}-\alpha \big( u\h{u}+\hb{u}
\b{v} \big) +\delta_{{1}}\delta_{{2}} \big( {\alpha}^{2}-1
\big) \h{u} \b{v}\\
v\b{u}+\h{v} \hb{v}-\alpha
\big( \b{u} \hb{v}+v\h{v} \big) +\delta_{{1}}\delta_{{2}
} \big( {\alpha}^{2}-1 \big) \h{v} \b{u}
\end{pmatrix}=\begin{pmatrix}0\\0\end{pmatrix},
\\
Q^*=\begin{pmatrix}
\delta_{{1}}\b{v} \tb{v}+\b{v} \thb{v} \delta_{{2}}+
\hb{u} \b{v}+\thb{v} \tb{v}\\
\delta_{{1}}\b{u} \tb{u}
+\b{u} \thb{u} \delta_{{2}}+\b{u} \hb{v}+\thb{u}
 \tb{u}
\end{pmatrix}=\begin{pmatrix}0\\0\end{pmatrix},
\\
A^*=\begin{pmatrix}
\delta_{{1}}\alpha \h{u} \th{u}+\h{u} \thb{v} \delta_{{2}}
+\h{u} \hb{u}+\thb{v} \th{u}\\
\delta_{{1}}\alpha \h{v}
 \th{v}+\h{v} \thb{u} \delta_{{2}}+\thb{u} \th{v}+\h{v} \hb{v}
\end{pmatrix}=\begin{pmatrix}0\\0\end{pmatrix},
\\
B^*=\begin{pmatrix}
\t{v} \tb{v}+\thb{v} \th{u}-\alpha \big( \t{v}
\th{u}+\thb{v} \tb{v} \big)\\
\t{u} \tb{u}+\thb{u}
 \th{v}-\alpha \big( \t{u} \th{v}+\thb{u} \tb{u}
\big)
\end{pmatrix}=\begin{pmatrix}0\\0\end{pmatrix},
\end{gather*}
where $\alpha$, $\delta_1$, $\delta_2$ are parameters. It is consistent around the cube.

\begin{Remark} As in the scalar case, one can not straightforwardly impose the cube system \eqref{consist-2} on the $\mathbb{Z}^3$ lattice. It needs to be accompanied
by 7 other cube systems which are obtained from the original one by reflections.
If $R_i$ denotes a reflection in the $i$th direction, e.g., application of~$R_1$
gives the cube system depicted in Fig.~\ref{refl}, then on the cube with center
$\big(n+\frac12,m+\frac12,l+\frac12\big)$ one should impose the cube system reflected by $R_1^nR_2^mR_3^l$.
\end{Remark}

\begin{figure}[th]
\setlength{\unitlength}{0.08em}
\hspace{6cm}\begin{picture}(200,170)(10,-20)
 \put(100, 0){\circle*{6}} \put(0 ,100){\circle*{6}}
 \put( 50, 30){\circle*{6}} \put(150,130){\circle*{6}}
 \put( 0, 0){\circle*{6}} \put(100,100){\circle*{6}}
 \put( 50,130){\circle*{6}} \put(150, 30){\circle*{6}}
 \put( 0, 0){\line(1,0){100}}
 \put( 0,100){\line(1,0){100}}
 \put(50,130){\line(1,0){100}}
 \put( 0, 0){\line(0,1){100}}
 \put(100, 0){\line(0,1){100}}
 \put(150,30){\line(0,1){100}}
 \put( 0,100){\line(5,3){50}}
 \put(100,100){\line(5,3){50}}
 \put(100, 0){\line(5,3){50}}
 \dashline{3}(50,30)(0,0)
 \dashline{3}(50,30)(150,30)
\dashline{3}(50,30)(50,130)
 \dashline{2}(25, 65)(125, 65)
 \dashline{2}(75, 15)(75, 115)
 \dashline{2}(50, 50)(100, 80)
 \put(25, 65){$B^*$}
 \put(125, 65){$B$}
 \put(75, 15){$Q$}
 \put(75, 115){$Q^*$}
 \put(50, 50){$A$}
 \put(100, 80){$A^*$}
 \put(-10,-10){$T_{k_1}u$}
 \put(90,-12){$ \t u $}
 \put(40,17){$T_{k_4} \h u $}
 \put(-20,106){$T_{k_5} \b u $}
 \put(155,20){$T_{k_2} \th u $}
 \put(45,135){$T_{k_7} \hb u $}
 \put(105,92){$T_{k_3} \tb u $}
 \put(152,130){$T_{k_6} \thb u $}
\end{picture}
\caption{CAC system \eqref{consist-2} reflected in $\sim$ direction.} \label{refl}
\end{figure}
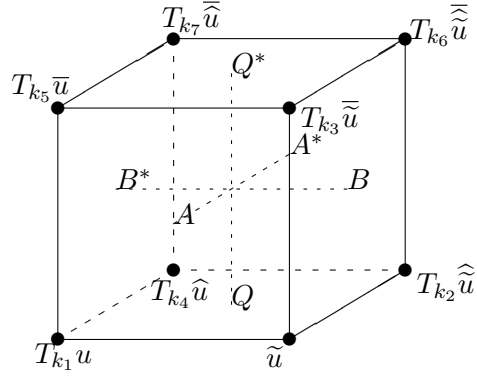

\subsection{Multi-component CAC lattice systems}\label{sec-2-2}

In this section we consider CAC lattice systems. The difference with the previous section is that we now require that the lattice equation $Q=0$ can be consistently imposed on the entire $\mathbb{Z}^2$ lattice, together with the cubes they are part of. The consequence of this requirement is two-fold:
\begin{itemize}\itemsep=0pt
\item we have to restrict ourselves to cube systems with $A=A^*$ and $B=B^*$,
\begin{subequations}\label{consist-10}
\begin{alignat}{3}
&Q\big(u,\t u,\h u, \th u\big)=0,\qquad && Q^*\big(\b u,\tb u,\hb u, \thb u\big)=0,&\\
&A\big(u,\t u,\b u, \tb u\big)=0,\qquad && A\big(\h u,\th u,\hb u, \thb u\big)=0,&\\
&B\big(u,\h u,\b u,\hb u\big)=0,\qquad && B\big(\t u,\th u,\tb u,\thb u\big)=0,&
\end{alignat}
\end{subequations}
because consistent cubes with $Q=0$ on the bottom face need to be glued together so that their common faces carry same equation, $A=0$ or $B=0$. Note that we want to allow for the possibility that $Q\neq Q^*$, so that (non-auto) B\"acklund transformations are included in the same framework.
\item we need to restrict the values the parameters of the extension, $k_i$, can acquire.
\end{itemize}

\begin{Theorem}\label{T1}
Suppose that the system \eqref{consist-10} is CAC in the sense $\thb u$ is uniquely determined by~\eqref{F} in terms of initial values $u$, $\t u$, $\h u$, $\b u$.
Extending $u$ to be a diagonal matrix \eqref{u-form}, the system
\begin{subequations}\label{consist-2'}
\begin{alignat}{3}
&Q\big(u, T_{a} \t u, T_{b} \h u, T_{a+b}\th u\big)=0,
\qquad && Q^*\big(T_{c} \b u, T_{a+c} \tb u, T_{b+c} \hb u, T_{a+b+c} \thb u\big)=0,& \\
&A\big(u, T_{a} \t u, T_{c} \b u, T_{a+c} \tb u\big)=0,
\qquad && A\big(T_{b} \h u, T_{a+b}\th u, T_{b+c} \hb u, T_{a+b+c} \thb u\big)=0,&\label{consist-2b'}\\
&B\big(u, T_{c} \b u, T_{b} \h u, T_{b+c} \hb u\big)=0,
\qquad && B\big(T_{a} \t u, T_{a+c}\tb u, T_{a+b} \th u, T_{a+b+c} \thb u\big)=0& \label{consist-2c'}
\end{alignat}
\end{subequations}
is CAC as well, where $a,b,c\in \mathbb{Z}$ $({\rm mod}~N)$, {\em and}
can be consistently defined on $\mathbb{Z}^2\otimes\{0,1\}$.
\end{Theorem}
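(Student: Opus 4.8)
The statement combines two claims: that a single cube carrying \eqref{consist-2'} is CAC, and that these cubes glue into a consistent system on $\mathbb{Z}^2\otimes\{0,1\}$. The first needs nothing new, since setting $k_1=a$, $k_2=b$, $k_3=c$, $k_4=a+b$, $k_5=a+c$, $k_6=b+c$, $k_7=a+b+c$ is just a special choice of the free parameters in Lemma~\ref{T0}; hence $\thb u$ is still uniquely determined by \eqref{F}, with relabelled vertices. All the content is therefore in the global consistency on the two-layer lattice, and my plan is to prove it by producing an explicit position-dependent cyclic relabeling that decouples the whole system into $N$ independent copies of the scalar one.

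Concretely, I would put a diagonal field at every site of $\mathbb{Z}^2\otimes\{0,1\}$, impose \eqref{consist-2'} on each elementary cube (with $\t u$, $\h u$, $\b u$ the unit shifts in the three lattice directions), and introduce the integer function $w(n,m,l):=an+bm+cl \bmod N$. The key remark is that $w$ is linear in the coordinates and hence a single-valued $\mathbb{Z}/N$-valued function on the lattice; consequently the map sending the field $u$ at site $(n,m,l)$ to $T_{w(n,m,l)}u$ (that is, cyclically shifting the diagonal entries by $w(n,m,l)$) is a globally well-defined bijection, with inverse given by the shift $-w$.

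I would then check, one face equation at a time, that after this relabeling each of the six equations of \eqref{consist-2'}, on an arbitrary cube, collapses entry-by-entry onto the matching scalar equation of \eqref{consist-10}. Two elementary facts reduce every check to bookkeeping: an affine-linear expression in diagonal matrices is diagonal, with $i$-th entry the scalar expression in the $i$-th entries; and $T_k$ simply permutes the diagonal entries cyclically by $k$. The decisive point is that the $T$-index attached to a vertex of \eqref{consist-2'} equals the \emph{difference} of $w$ between that vertex and the base vertex of the cube, so the shift introduced by the relabeling is cancelled exactly at every vertex of every cube, leaving the scalar $Q$ on the bottom, $Q^*$ on the top, and the \emph{same} $A$ (resp.\ $B$) on the two faces perpendicular to the second (resp.\ first) direction, i.e.\ precisely \eqref{consist-10}. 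This is exactly where the restriction from the arbitrary $k_i$ of Lemma~\ref{T0} to the linear assignment $k_i=w(\text{vertex})$ enters; for general $k_i$ no such single-valued relabeling exists.

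Since the decoupled system is $N$ independent copies of the scalar CAC lattice system \eqref{consist-10}, which by hypothesis can be consistently posed on $\mathbb{Z}^2\otimes\{0,1\}$, and the relabeling is a global bijection, the multi-component system inherits this consistency; no gluing in the third direction is needed because there is only the single layer of cubes between $l=0$ and $l=1$. I expect the only real obstacle to be conceptual rather than computational: finding the correct relabeling and verifying its single-valuedness across the whole lattice. Equivalently, one may phrase this obstacle as checking directly that the shared face of two neighbouring cubes carries the same equation up to a uniform cyclic shift ($T_a$ between cubes offset in the first direction, $T_b$ in the second), which leaves a component-wise equation invariant; once this is settled the per-equation verification is routine.
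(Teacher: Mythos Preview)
Your proof is correct, and the CAC part is indeed just Lemma~\ref{T0} specialised as you describe. For the global consistency, however, you take a genuinely different route from the paper. The paper argues \emph{locally}: it places two neighbouring cubes side by side (Fig.~\ref{neighbors}), observes that the equation written on the shared face of the right cube is $Q(T_a\t u,T_{2a}\tt u,T_{a+b}\th u,T_{2a+b}\tth u)=0$, and checks via the conjugation identity $T_a Q(\cdot)=Q(T_a\cdot,\ldots)$ that this is literally the same system of $N$ scalar equations as the shifted left-face equation $Q(\t u,T_a\tt u,T_b\th u,T_{a+b}\tth u)=0$, only with the components rearranged. You instead argue \emph{globally}: the site-dependent relabeling $u(n,m,l)\mapsto T_{an+bm+cl}u(n,m,l)$ is a well-defined bijection on the whole lattice that turns every face equation of \eqref{consist-2'} into the corresponding face equation of \eqref{consist-10}, so the multi-component system is $N$ decoupled copies of the scalar one and inherits its consistency for free.

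Your approach is in fact the inverse of Atkinson's ``trivial Toeplitz extension'' recalled in the introduction, and it buys you more than the theorem asks: not merely consistency on $\mathbb{Z}^2\otimes\{0,1\}$, but an explicit trivialisation of the system. The paper's local check is more hands-on and makes transparent exactly which algebraic identity (conjugation of an affine-linear function by $\sigma^a$) does the work; it also avoids having to verify that the relabeling is globally single-valued, though as you note this is immediate from the linearity of $w$. You correctly identify the paper's argument as an equivalent phrasing in your closing sentence.
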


\begin{proof}The CAC property follows directly from Lemma~\ref{T0}. We have to
show that we can consistently define the same cube system on neighboring
cubes.

\begin{figure}[ht]\centering
\setlength{\unitlength}{0.07em}
\hskip 2.0cm
\begin{picture}(200,170)(10,-20)
 \put(100, 0){\circle*{6}} \put(0 ,100){\circle*{6}}
 \put( 50, 30){\circle*{6}} \put(150,130){\circle*{6}}
 \put( 0, 0){\circle*{6}} \put(100,100){\circle*{6}}
 \put( 50,130){\circle*{6}} \put(150, 30){\circle*{6}}
 \put( 0, 0){\line(1,0){100}}
 \put( 0,100){\line(1,0){100}}
 \put(50,130){\line(1,0){100}}
 \put( 0, 0){\line(0,1){100}}
 \put(100, 0){\line(0,1){100}}
 \put(150,30){\line(0,1){100}}
 \put( 0,100){\line(5,3){50}}
 \put(100,100){\line(5,3){50}}
 \put(100, 0){\line(5,3){50}}
 \dashline{3}(50,30)(0,0)
 \dashline{3}(50,30)(150,30)
\dashline{3}(50,30)(50,130)
 \dashline{2}(25, 65)(125, 65)
 \dashline{2}(75, 15)(75, 115)
 \dashline{2}(50, 50)(100, 80)
 \put(25, 65){$B$}
 \put(125, 65){$B$}
 \put(75, 15){$Q$}
 \put(75, 115){$Q^*$}
 \put(50, 50){$A$}
 \put(100, 80){$A$}
 \put(-10,-10){$u$}
 \put(90,-15){$T_{a} \t u$}
 \put(25,30){$T_{b} \h u$}
 \put(-10,110){$T_{c} \b u $}
 \put(155,20){$T_{a+b} \th u $}
 \put(45,135){$T_{b+c} \hb u $}
 \put(105,92){$T_{a+c} \tb u $}
 \put(155,130){$T_{a+b+c} \thb u $}
\end{picture}
\hspace{.1cm}
\begin{picture}(200,170)(10,-20)
 \put(100, 0){\circle*{6}} \put(0 ,100){\circle*{6}}
 \put( 50, 30){\circle*{6}} \put(150,130){\circle*{6}}
 \put( 0, 0){\circle*{6}} \put(100,100){\circle*{6}}
 \put( 50,130){\circle*{6}} \put(150, 30){\circle*{6}}
 \put( 0, 0){\line(1,0){100}}
 \put( 0,100){\line(1,0){100}}
 \put(50,130){\line(1,0){100}}
 \put( 0, 0){\line(0,1){100}}
 \put(100, 0){\line(0,1){100}}
 \put(150,30){\line(0,1){100}}
 \put( 0,100){\line(5,3){50}}
 \put(100,100){\line(5,3){50}}
 \put(100, 0){\line(5,3){50}}
 \dashline{3}(50,30)(0,0)
 \dashline{3}(50,30)(150,30)
\dashline{3}(50,30)(50,130)
 \dashline{2}(25, 65)(125, 65)
 \dashline{2}(75, 15)(75, 115)
 \dashline{2}(50, 50)(100, 80)
 \put(25, 65){$B$}
 \put(125, 65){$B$}
 \put(75, 15){$Q$}
 \put(75, 115){$Q^*$}
 \put(50, 50){$A$}
 \put(100, 80){$A$}
 \put(-10,-10){$\t u$}
 \put(90,-15){$T_{a} \tt u$}
 \put(25,30){$T_{b} \th u$}
 \put(-10,110){$T_{c} \tb u $}
 \put(155,20){$T_{a+b} \tth u $}
 \put(45,135){$T_{b+c} \thb u $}
 \put(105,92){$T_{a+c} \ttb u $}
 \put(155,130){$T_{a+b+c} \tthb u $}
\end{picture}
\vskip 0.5cm
\caption{Neighboring cubes supporting the same cube system.}\label{neighbors}
\end{figure}
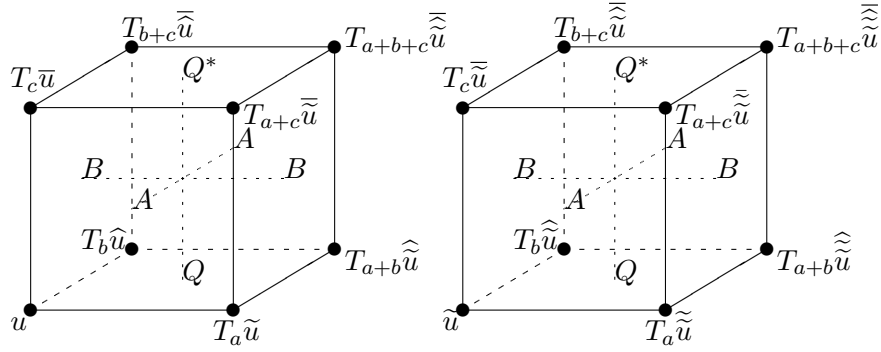

Consider two neighboring cubes, as in Fig.~\ref{neighbors}. Before we can glue them together we need to apply $T_a$ to every vertex of the cube on the right. But then we have to establish, e.g., that the shifted system of equations
\begin{gather} \label{soe1}
{\cal S}_n Q\big(u, T_a \t u, T_b \h u, T_{a+b}\th u\big) = Q\big(\t u, T_a \t{\t u}, T_b \th u, T_{a+b}\th {\t u}\big)=0,
\end{gather}
where ${\cal S}_nf(n,m)=f(n+1,m)$, is the same system of equations as the system
\begin{gather} \label{soe2}
Q\big(T_a \t u, T_{2a} \t{\t u}, T_{a+b} \th u, T_{2a+b}\th {\t u}\big)=0.
\end{gather}
Indeed, we have the identity
\[
T_a Q\big(\t u, T_a \t{\t u}, T_b \th u, T_{a+b}\th {\t u}\big) = Q\big(T_a \t u, T_{2a} \t{\t u}, T_{a+b} \th u, T_{2a+b}\th {\t u}\big),
\]
which shows that the system (\ref{soe2}) is just a rearrangement of the components of the system~(\ref{soe1}). In fact, it can be shown that for any fractional affine linear function $g$ of diagonal $N\times N$ matrices $(m_1,\ldots,m_n)$,
we have
\begin{gather*}
\theta g(m_1,\ldots,m_n)\theta^{-1}=g\big(\theta m_1\theta^{-1},\dots, \theta m_n \theta^{-1}\big),
\end{gather*}
for any invertible $N\times N$ matrix $\theta$.
\end{proof}

It follows from the proof that the equations in~(\ref{consist-2'}) on the right may be simplified, i.e.,
\begin{gather*}
Q^*\big(\b u, T_{a} \tb u, T_{b} \hb u, T_{a+b} \thb u\big)=0,\qquad
A\big(\h u, T_{a}\th u, T_{c} \hb u, T_{a+c} \thb u\big)=0,\qquad
B\big(\t u, T_{c}\tb u, T_{b} \th u, T_{b+c} \thb u\big)=0.
\end{gather*}

It also follows that in the case where $Q^*=Q$, the cube system can be imposed on the entire $\mathbb{Z}^3$-lattice. In the case where $Q^*\neq Q$ one needs a second cube system obtained from the first by the reflection $R_3$, and impose the reflected system on cubes with center $\big(n+\frac12,m+\frac12,l+\frac12\big)$ with $l$ odd.

\begin{Remark} The cubes in Figs.~\ref{F:1}(b), \ref{refl}, and \ref{neighbors} are useful to define the equations which live on the faces of the cubes. However, one should be aware that the field $u$ (which provides the support for equations~(\ref{consist-2}) and~(\ref{consist-2'}) is defined in the usual way, namely $\t u(n,m,l)=u(n+1,m,l)$, as in Fig.~\ref{F:1}. We do {\em not} have $\t u(n,m,l)=T_a u(n+1,m,l)$.
\end{Remark}

\subsubsection{Examples} \label{sec-2-4}
The $N$-component equation
\begin{gather} \label{QQ}
Q\big(u, T_a \t u, T_b \h u, T_{a+b}\th u\big)=0,
\end{gather}
where $u$ is an $N\times N$ diagonal matrix \eqref{u-form}), will be referred to as the $N[a,b]$ extension of the scalar equation \eqref{Q}. Similarly, the multi-component cube system \eqref{consist-2'} will be referred to as the $N[a,b,c]$ extension of the cube system~\eqref{consist-10}. In this terminology, the trivial Toeplitz extension introduced in Appendix~B of~\cite{Jat} corresponds to the $N[1,1]$ extension.

\subsubsection*{Discrete Burgers}
A simple example of a CAC scalar equation is the 3-point discrete Burgers equation \cite{CZZ-arxiv-2019,Levi-Bur-1983}
\begin{gather} \label{burg}
\th u (p-q+\h u-\t u)=p \h u-q \t u.
\end{gather}

The parameters in this equation, $p$, $q$, are called lattice parameters, $p$
corresponds to the tilde-direction, $q$ corresponds to the hat-direction,
and there is a third parameter, $r$, which corresponds to the bar-direction. The equations on the faces of the corresponding consistent cube are each of the form~(\ref{burg}) with different dependence on the lattice parameters, i.e., setting
\begin{gather*}
Q\big(\t u,\h u, \th u\big)=Q\big(\t u,\h u, \th u; p, q\big):= \th u (p-q+\h u-\t u)-p \h u + q \t u, \\
A\big(\t u,\b u, \tb u\big)=Q\big(\t u,\b u, \tb u; p, r\big), \qquad
B\big(\h u,\b u,\hb u\big)=Q\big(\h u,\b u,\hb u; q, r\big),
\end{gather*}
the cube system
\begin{alignat*}{3}
&Q\big(\t u,\h u, \th u\big)=0,\qquad && Q\big(\tb u,\hb u, \thb u\big)=0,& \\
&A\big(\t u,\b u, \tb u\big)=0,\qquad && A\big(\th u,\hb u, \thb u\big)=0,& \\ 
&B\big(\h u,\b u,\hb u\big)=0,\qquad && B\big(\th u,\tb u,\thb u\big)=0,&
\end{alignat*}
is CAC (with no dependence on~$u$).

The $2[0,1]$ extension of equation \eqref{burg} is
\begin{gather}\label{210B}
\th u (p-q+\h v-\t u)=p \h v-q \t u,\qquad
\th v (p-q+\h u-\t v)=p \h u-q \t v,
\end{gather}
and its $2[1,1]$ extension is
\begin{gather*}
\th u (p-q+\h v-\t v)=p \h v-q \t v,\qquad
\th v (p-q+\h u-\t u)=p \h u-q \t u.
\end{gather*}
The $2[1,0]$ extension is the same as (\ref{210B}) (after interchanging the two equations).

\subsubsection*{ABS equations}
The ABS equations also depend on lattice parameters. They are scalar equations of the form
\[
Q\big(u,\t u, \h u, \th u; p,q\big)=0
\]
and can be embedded in a CAC system as follows
\begin{alignat*}{3}
&Q\big(u,\t u,\h u, \th u; p,q\big)=0,\qquad &&Q\big(\b u,\tb u,\hb u, \thb u; p,q\big)=0,&\\
&Q\big(u,\t u,\b u, \tb u; p,r\big)=0,\qquad &&Q\big(\h u,\th u,\hb u, \thb u; p,r\big)=0,&\\
&Q\big(u,\h u,\b u,\hb u; q,r\big)=0,\qquad &&Q\big(\t u,\th u,\tb u,\thb u; q,r\big)=0.&
\end{alignat*}

Similar to the discrete Burgers equation, each ABS equation has two types of 2-component generalizations, $2[0,1]$ and $2[1,1]$, which are respectively given by
\begin{gather}\label{2c1}
 Q\big(u,\t u, T \h u, T \th u; p,q\big)=0,
\end{gather}
and
\begin{gather}\label{2c2}
Q\big(u, T \t u, T \h u, \th u; p,q\big)=0.
\end{gather}
The latter form has been investigated in \cite{FZ14}. Explicitly, for the H1 equation, also known as the lattice potential KdV equation (lpKdv),
\begin{gather}\label{H1-1c}
\big(u-\th u\big)\big(\t u-\h u\big)+q-p=0,
\end{gather}
the $2[0,1]$ extension is
\begin{gather}\label{2cH11}
\big(u-\th v\big)\big(\t u-\h v\big)+q-p=0, \qquad
\big(v-\th u\big)\big(\t v-\h u\big)+q-p=0,
\end{gather}
and the $2[1,1]$ extension is
\begin{gather*}
\big(u-\th u\big)\big(\t v-\h v\big)+q-p=0, \qquad
\big(v-\th v\big)\big(\t u-\h u\big)+q-p=0.
\end{gather*}
The latter appeared in \cite{BHQK12}, where the CAC property was used to construct its Lax pair. The $2[1,1]$ extension of the lattice Schwarzian KdV equation was given in \cite[equation~(B.6)]{Jat}.

When $N=3$ one can have $3[0,1]$, $3[0,2]$, $3[1,1]$, and $3[1,2]$ generalizations. For example, the $3[1,1]$ extension of H1 is
\begin{gather}
\big(u-\th w\big)\big(\t v-\h v\big)+q-p=0, \qquad
\big(v-\th u\big)\big(\t w-\h w\big)+q-p=0,\nonumber\\
\big(w-\th v\big)\big(\t u-\h u\big)+q-p=0.\label{3cH1}
\end{gather}

\begin{Remark} While each ABS-equation is part of a CAC system which comprises copies of the same equation (with appropriate dependence on the lattice parameters), this is not necessarily the case for their multi-component extensions. For example, the 2-component~($Q$) equation~(\ref{2cH11}) sits in the cube system $2[0,1,0]$ together with
\begin{gather} \label{JH1}
A\colon \ \begin{array}{l}
(u-\tb u)(\t u-\b u)+r-p=0, \\
(v-\tb v)(\t v-\b v)+r-p=0,
\end{array}\\
B\colon \ \begin{array}{l}
(u-\hb v)(\h v-\b u)+r-q=0, \\
(v-\hb u)(\h u-\b v)+r-q=0.
\end{array} \label{JH2}
\end{gather}
Here the $B$ equation has the same form as the $Q$ equation, but the $A$ equation is decoupled.
\end{Remark}

\subsubsection*{Auto-B\"acklund transformations}
There also exist CAC systems containing two different equations. For example, one can compose a CAC system using the
lattice potential modified KdV (lpmKdV, or H$3^0$) equation
\begin{gather}\label{H30}
B\big(u,\h u,\b u,\hb u;q,r\big)=q\big(u\h u-\b u\hb u\big)-r\big(u\b u-\h u\hb u\big),
\end{gather}
on the side faces and the discrete sine-Gordon (dsG) equation
\begin{gather*}
Q\big(u,\t u,\h u,\th u;p,q\big)=p\big(u\th u-\t u\h u\big)-q\big(u\t u\h u \th u-1\big),
\end{gather*}
for the other four faces of the cube $(Q,A)$ \cite{RB11,HZ-2009-preprint}.
Multi-component extension yields the CAC system
\begin{subequations}\label{consist-SG}
\begin{gather}
Q\big(u, T_{a} \t u , T_{b}\h u,T_{a+b} \th u ;p,q\big)=0,\label{consist-SG1}\\
Q\big(u,T_{a} \t u ,T_{c}\b u, T_{a+c}\tb u;p,r\big)=0,\label{consist-SG2}\\
B\big(u,T_{b}\h u,T_{c}\b u,T_{b+c}\hb u;q,r\big)=0,\label{consist-SG3}
\end{gather}
\end{subequations}
with their shifts.
In particular, we mention the $2[1,1]$ extension of the dsG equation
\begin{gather} \label{2csG}
p\big(u\th u-\t v\h v\big)-q\big(u\t v\h v \th u-1\big)=0,\qquad
p\big(v\th v-\t u\h u\big)-q\big(v\t u\h u \th v-1\big)=0,
\end{gather}
whose auto-BT consists of the dsG equation and the lpmKdV equation, which gives rise to an asymmetric Lax-pair, given in Section~\ref{slax}.

The auto-B\"acklund transformations given in Table~2 of~\cite{Atk08} provide other instances of the same situation. For example, one can take the ABS equation called Q$1^1$ as the $Q$-equation, that is
\[
Q\big(u,\t u,\h u,\th u; p,q\big):= p \big(u-\h u\big) \big(\t u-\h{\t u}\big)- q \big(u-\t u\big) \big(\h u-\h{\t u}\big)+ p q (p-q) =0.
\]
at the bottom face, and $Q\big(\b u,\tb u,\hb u,\thb u; p,q\big)=0$ on the top face. A CAC system is obtained by placing the auto-BT, where the B\"acklund parameter $r$ plays the role of the lattice parameter,
\begin{gather} \label{BT}
A\big(u,\t u,\b u,\tb u,p,r\big):= \big(u-\t u\big) \big(\b u-\tb u\big)+p \big(u+\t u+\b u +\tb u+ p +2r\big) =0,
\end{gather}
on the front face, $A\big(\h u,\th u,\hb u,\thb u,p,r\big)=0$ on the back face,
$A\big(u,\h u,\b u,\hb u,q,r\big)=0$ on the left face and $A\big(\t u,\th u,\tb u,\thb u,q,r\big)=0$ on the right face. Such CAC lattice systems can be consistently extended to multi-component CAC lattice systems by virtue of Theorem~\ref{T1}.

Thus, when $Q^*=Q$ the multi-component equations
\begin{subequations}\label{BT3}
\begin{gather}
 A\big(u, T_{a} \t u, T_{c} \b u, T_{a+c} \bt u; p,r \big)=0, \label{BT3a}\\
 B\big(u, T_{c} \b u, T_{b} \h u, T_{b+c} \bh u; r,q \big)=0, \label{BT3b}
\end{gather}
\end{subequations}
can be interpreted as an auto-BT, mapping one solution $u$ to another solution~$\b u$. This is because the top equation in \eqref{consist-2'} can be rewritten as
\[
T_c Q \big( \b u, T_{a} \bt u, T_{b} \bh u, T_{a+b} \th{\b u}; p,q \big) =0.
\]

We remark that the equation~(\ref{BT}) (in fact, any auto-BT) is an integrable equation on the~$\mathbb{Z}^2$ lattice. The equation~(\ref{BT}) is not in the ABS list and neither is the sine-Gordon equation, because they are not CAC with copies of themselves. However, they do posses an auto-BT and hence (non-symmetric) Lax pairs (where the B\"acklund parameter provides the so called spectral parameter) can be constructed, see Section~\ref{slax}.

\subsubsection*{B\"acklund transformations}
For (non-auto) B\"acklund transformations, such as the ones given in Table~3 in~\cite{Atk08}, the equations on the bottom face, $Q$, and on the top face $Q^*$ are different. For example, taking $Q=$H2,
\begin{gather*}
\big({u}-\th {u}\big)\big(\t{{u}}-\h{{u}}\big)-(p-q)\big({u}+\t{{u}}+\h{{u}}+\th {u}+p+q\big)=0,
\end{gather*}
and posing
\begin{gather*}
 A \colon \ u+\t u+p=2{\b u}\t {\b u}, \qquad 
 B\colon \ u+\h u+q=2{\b u}\h {\b u}, 
\end{gather*}
and their shifted versions, $\h{A}$ and $\t{B}$ on the side faces, one finds that on the top face the variable ${\b u}$ satisfies the $Q^*=$H1 equation \eqref{H1-1c}.

In the general $N$-component cube system, denoted $N[a,b,c]$, the side system
\begin{gather*}
 u+T_a \t u+p=2(T_c {\b u})\big( T_{a+c} \t {\b u}\big),\qquad 
 u+T_b \h u+q=2(T_c {\b u}) \big(T_{b+c} \h {\b u}\big) 
\end{gather*}
provides a BT between the $N$-component H2 system
\begin{gather*}
\big({u}-T_{a+b}\th {u}\big)\big(T_a\t{{u}}-T_b\h{{u}}\big)-(p-q)\big({u}+T_a\t{{u}}+T_b\h{ u}+T_{a+b}\th u+p+q\big)=0,
\end{gather*}
and the $N$-component H1 system
\begin{gather*}
 \big(T {\b u}-T_{a+b} \th {\b u}\big)\big(T_{a}\t{{\b u}}-T_{b} \h{{\b u}}\big)-p+q=0
\end{gather*}

There are other examples of CAC lattice systems such as the ones in~\cite{Hietarinta-JNMP-2019}. These all allow multi-component extension.

\subsubsection{Lax pairs} \label{slax}
In a consistent system of the form \eqref{consist-2'} the Lax pair of equation~\eqref{QQ}
can be constructed through the BT \eqref{BT3} following the standard procedure,
cf.~\cite{ABS03,BHQK12,Nij02}. Here one would introduce $\b u= g f^{-1}$ with
\begin{gather}\label{fg}
f=\operatorname{diag}(f_1,f_2,\dots,f_N),\qquad g=\operatorname{diag}(g_1,g_2,\dots,g_N),
\end{gather}
and $\Psi=(f_1,f_2,\dots,f_N, g_1, g_2, \dots, g_N)^{\rm T}$,
and then equation \eqref{BT3} can be cast into the form
$\t \Psi=L(u,\t u)\Psi$, $\h \Psi=M(u,\h u)\Psi$.
With the correct scaling factors, the pair of matrices $L$, $M$ form a $2N\times 2N$ Lax pair of~\eqref{QQ}. In Appendix~\ref{appendixA} we show how this approach yields \eqref{Lax-LM2}.

On the other hand, one can directly write down a Lax pair of the $N$-component system~\eqref{QQ} in terms of Lax matrices of the scalar equation.
Suppose that~\eqref{consist-10} is a scalar 3D consistent lattice system,
and the bottom equation
\begin{gather}\label{Q0}
Q\big(u,\t u,\h u,\th u\big)=0
\end{gather}
has a $2\times 2$ Lax pair (e.g., the one obtained from the BT at hand)
\begin{gather}
\t \psi=\mathcal{L}(u,\t u)\psi,\qquad \h \psi=\mathcal{M}(u,\h u)\psi,\label{Lax-LM1}
\end{gather}
where $\psi=(\psi_1,\psi_2)^{\rm T}$ and $\mathcal{L}$ and $\mathcal{M}$ are $2\times 2$ matrices. Considering $N$ copies of the equation, each scalar equation
$Q\big(u_i,\t u_i,\h u_i,\th u_i\big)=0$ has a $2\times 2$ Lax pair $\mathcal{L}(u_i,\t u_i)$, $\mathcal{M}(u_i,\h u_i)$. Then, we have the following Lax pair for the multi-component case.

\begin{Theorem}\label{T4}
Suppose that the scalar equation \eqref{Q0}
has a Lax pair \eqref{Lax-LM1}. Then the $N$-component extension \eqref{QQ} has a $2N\times 2N$ Lax pair
\begin{gather}\label{Lax-LM2}
\t \Phi=\theta^{-a-c}L(u, T_a\t u)\theta^c\Phi,\qquad \h \Phi=\theta^{-b-c}M(u, T_b\h u )\theta^c\Phi.
\end{gather}
where, with $u$ the diagonal matrix \eqref{u-form} and $v=\operatorname{diag}(v_1, v_2, \dots, v_N)$,
\begin{subequations}\label{LM}
\begin{gather}
 L(u,v)=\operatorname{diag}(\mathcal{L}(u_1,v_1),\mathcal{L}(u_2,v_2),\dots,\mathcal{L}(u_N,v_N)),\\
 M(u,v)=\operatorname{diag}(\mathcal{M}(u_1,v_1),\mathcal{M}(u_2,v_2),\dots,\mathcal{M}(u_N,v_N))
\end{gather}
\end{subequations}
in which $\mathcal{L}$ and $\mathcal{M}$ are the Lax matrices given in~\eqref{Lax-LM1},
$\theta=\sigma_{2N}^2$ and $\sigma_{2N}$ is a $2N\times 2N$ cyclic matrix defined by~\eqref{sigma}.
\end{Theorem}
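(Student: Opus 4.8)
The plan is to verify the discrete zero-curvature (compatibility) condition for the pair \eqref{Lax-LM2} directly, and to show that it holds precisely on solutions of the $N$-component equation \eqref{QQ}. Abbreviating $\mathbb{L}:=\theta^{-a-c}L(u,T_a\t u)\theta^c$ and $\mathbb{M}:=\theta^{-b-c}M(u,T_b\h u)\theta^c$ for the two matrices in \eqref{Lax-LM2}, the compatibility of $\t\Phi=\mathbb{L}\Phi$ and $\h\Phi=\mathbb{M}\Phi$ is the identity $\h{\mathbb{L}}\,\mathbb{M}=\t{\mathbb{M}}\,\mathbb{L}$. Since $\theta$ is a constant matrix, the shifts act only on the field arguments, so $\h{\mathbb{L}}=\theta^{-a-c}L(\h u,T_a\th u)\theta^c$ and $\t{\mathbb{M}}=\theta^{-b-c}M(\t u,T_b\th u)\theta^c$.

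The key algebraic ingredient is a conjugation rule linking $\theta$ to the cyclic transformations $T_k$. Because $\theta=\sigma_{2N}^2$ shifts by two slots in the $2N$-dimensional space while the $2\times2$ blocks of $L$ and $M$ are indexed by the same cyclic group as the diagonal entries of $u$, conjugation by $\theta$ permutes the blocks exactly as conjugation by $\sigma_N$ permutes the diagonal of $u$. I would first establish
\[
\theta^k L(u,v)\theta^{-k}=L(T_k u,T_k v),\qquad \theta^k M(u,v)\theta^{-k}=M(T_k u,T_k v),
\]
where $(T_k u)_i=u_{i+k}$ (indices mod $N$). This step pins down the powers of $\theta$ appearing in \eqref{Lax-LM2}, and it is where the specific choice $\theta=\sigma_{2N}^2$ enters.

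With this rule the remainder is bookkeeping. Using it to absorb the inner powers $\theta^{\mp b}$ and $\theta^{\mp a}$ (and $T_aT_b=T_{a+b}$), both products collapse to the common form $\theta^{-a-b-c}(\,\cdot\,)\theta^c$, so the identity reduces to equality of the inner block-diagonal factors. Block by block this reads
\[
\mathcal{L}(\h u_{i+b},\th u_{i+a+b})\,\mathcal{M}(u_i,\h u_{i+b})=\mathcal{M}(\t u_{i+a},\th u_{i+a+b})\,\mathcal{L}(u_i,\t u_{i+a}),
\]
which is exactly the scalar zero-curvature condition coming from \eqref{Lax-LM1}, evaluated at $(u_i,\t u_{i+a},\h u_{i+b},\th u_{i+a+b})$. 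As that scalar identity holds precisely when $Q(u_i,\t u_{i+a},\h u_{i+b},\th u_{i+a+b})=0$, and this quadruple is the $i$-th component of \eqref{QQ}, the compatibility of \eqref{Lax-LM2} holds on, and only on, solutions of \eqref{QQ}.

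The main obstacle is the conjugation rule together with keeping the index shifts aligned: one must track how $T_a$, $T_b$, $T_{a+b}$ attach to the $\t u$-, $\h u$-, and $\th u$-fields, and verify that conjugating the block matrices past the inner powers of $\theta$ converts $T_a$, $T_b$ into the required $T_{a+b}$ on the corner field. The auxiliary scaling factors that make the scalar compatibility reproduce $Q$ exactly act within each $2\times2$ block and are inherited without cross-block interference, since $L$ and $M$ are block-diagonal.
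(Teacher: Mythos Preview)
Your proposal is correct and follows essentially the same approach as the paper: compute the compatibility condition $\h{\mathbb{L}}\,\mathbb{M}=\t{\mathbb{M}}\,\mathbb{L}$, cancel the outer powers of $\theta$, use conjugation by $\theta$ to absorb the inner powers into the arguments of $L$ and $M$, and then recognise the resulting block-diagonal identity $L(T_b\h u,T_{a+b}\th u)M(u,T_b\h u)=M(T_a\t u,T_{a+b}\th u)L(u,T_a\t u)$ as the componentwise scalar zero-curvature condition equivalent to \eqref{QQ}. The only difference is that you state the conjugation rule $\theta^k L(u,v)\theta^{-k}=L(T_k u,T_k v)$ explicitly, whereas the paper applies it tacitly in the chain of equivalences.
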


Note that the gauge transformation
$\Phi'= \theta^c\Phi$,
transforms the Lax pair \eqref{Lax-LM2} into
\begin{gather}\label{Lax-LM2'}
\t \Phi'=\theta^{-a}L(u, T_a\t u) \Phi',\qquad \h \Phi'=\theta^{-b}M(u, T_b\h u ) \Phi',
\end{gather}
which does not depend on $c$.

\begin{proof}The compatibility of the linear system $\t \Phi=L(u, \t u)\Phi$, $\h \Phi=M(u, \h u ) \Phi$
equals
\begin{gather}\label{LM-ML}
 L\big(\h u,\th u\big)M (u,\h u )=M\big(\t u,\th u\big)L(u,\t u),
\end{gather}
which is equivalent to equation \eqref{Q0} with diagonal matrix $u$ given by \eqref{u-form}.
From the compa\-ti\-bility of \eqref{Lax-LM2} we find
\begin{gather*}
 \theta^{-a-c} L\big(\h u,T_{a}\th u\big)\theta^{c}\theta^{-b-c}M(u,T_b\h u)\theta^c
 =\theta^{-b-c} M\big(\t u,T_{b}\th u\big)\theta^{c}\theta^{-a-c}L(u, T_a\t u)\theta^c\\
\qquad{} \Longleftrightarrow \theta^b L\big(\h u,T_{a}\th u\big)\theta^{-b}M(u,T_b\h u) =\theta^a M\big(\t u,T_{b}\th u\big)\theta^{-a}L(u, T_a\t u)\\
\qquad {} \Longleftrightarrow L\big(T_b\h u,T_{a+b}\th u\big)M(u,T_b\h u)=M\big(T_a\t u,T_{a+b}\th u\big)L(u, T_a\t u),
\end{gather*}
which gives rise to \eqref{QQ}, as \eqref{Q0} arises from~\eqref{LM-ML}.
\end{proof}

As an example, consider the H1 equation \eqref{H1-1c} which admits the Lax pair
\begin{gather*} \label{LMH}
\mathcal{L}(u,\t u)=\mathcal{H}(u,\t u,p),\qquad \mathcal{M}(u,\h u)=\mathcal{H}(u,\h u,q),
\end{gather*}
with
\begin{gather*}
\mathcal{H}(u,\t u,p)=
\left(
 \begin{matrix}
 -u & u\t u+p-r \\
 -1 & \t u
 \end{matrix}
\right).
\end{gather*}
According to \eqref{Lax-LM2'}, the $3[1,1]$ extension of H1 \eqref{3cH1} admits the Lax pair
\begin{gather*}
\t \Phi=
\left(
 \begin{matrix}
 0 & 0 & \mathcal{H}(w,\t u,p) \\
 \mathcal{H}(u,\t v,p) & 0 & 0 \\
 0 & \mathcal{H}(v,\t w,p) & 0
 \end{matrix}
\right)\Phi, \\
\h \Phi=
\left(
 \begin{matrix}
 0 & 0 & \mathcal{H}(w,\h u,q) \\
 \mathcal{H}(u,\h v,q) & 0 & 0 \\
 0 & \mathcal{H}(v,\h w,q) & 0
 \end{matrix}
\right)\Phi.
\end{gather*}

\subsubsection*{Asymmetrical Lax pairs}
When equation $A$ is not related to equation $B$ by a standard change in lattice parameters, the corresponding Lax pair for $Q$ is asymmetrical. Similarly, one also finds asymmetry if one constructs a Lax pair for an auto-BT ($A$), using its auto-BT given by $B$, $Q$. For example, the auto-BT \eqref{consist-SG2}--\eqref{consist-SG3} provides an asymmetrical Lax pair for the 2-component dsG equation~\eqref{2csG},
\[
\t \Phi=
\left(
 \begin{matrix}
 0 & \mathcal{L}(u,\t v) \\
 \mathcal{L}(v,\t u)& 0
 \end{matrix}
\right)\Phi, \qquad \h \Phi=
\left(
 \begin{matrix}
 0 & \mathcal{M}(u,\h v) \\
 \mathcal{M}(v,\h u)& 0
 \end{matrix}
\right)\Phi,
\]
where $a=b=c=1$ and
\[
\mathcal{L}(u,\t u)=
\left(
 \begin{matrix}
 p& -r\t u\\
 -\dfrac{r}{u}& \dfrac{p\t u}{u} \\
 \end{matrix}
\right),\qquad
\mathcal{M}(u,\h u)=
\left(
 \begin{matrix}
 \dfrac{r\h u}{u}& -\frac{q}{u}\\
 -q\h u& r
 \end{matrix}
\right).
\]

For all the auto-BTs gives in \cite[Table 2]{Atk08} a superposition principle emerges for solutions of the equation that are related by the auto-BT. This gives rise to a different asymmetrical Lax-pair for the auto-BT. The superposition principle for solutions of the lmpKdV equation related by the dsG equation is
\begin{gather}\label{H30s}
Q\big(u,\t u,\dot u,\dot{\h u};p,s\big)=s\big(u\t u-\dot{u}\dot{\t u}\big)-p\big(u\dot u-\t u\dot{\t u}\big),
\end{gather}
which is an lmpKdV equation with $p$ and $s$ interchanged. The cube system with the superposition principle (\ref{H30s}) on the bottom and top faces and the dsG equation on the side faces, cf.\ Fig.~\ref{SPP}, is consistent, and admits multi-component extension.
\begin{figure}[h!]\centering
\includegraphics[width=9cm]{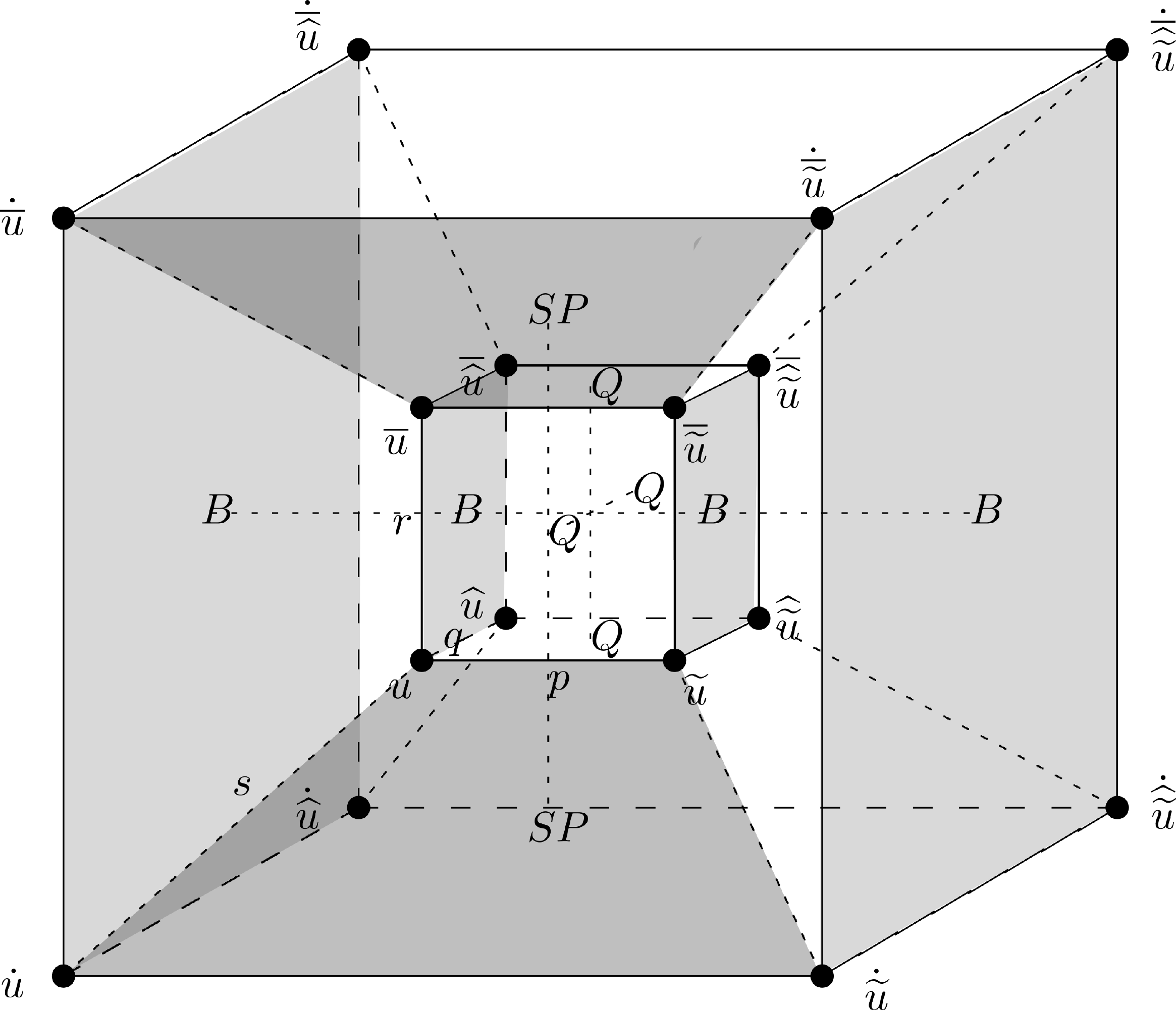}
\caption{On the inner cube we have the lmpKdV ($B$) on the left face and on the right face. They are connected by the auto-BT (with parameter $p$) which is the dsG equation ($Q$) on the front, back, bottom and top faces. On the left cube the auto-BT (in dot-direction with parameter $s$) yields an lmpKdV equation on the left face. On the right cube the auto-BT with parameter $s$ yields a fourth lmpKdV equation on the right face. The four solutions to these four lmpKdV equations are related by the superposition principle. On the front cube the superposition principle ($SP$), which is the lmpKdV equation with $p$ and $s$ interchanged, is on the top and the bottom face, and the dsG equation is on the four side faces. This provides another CAC cube system.}\label{SPP}
\end{figure}

One can also construct Lax-pairs from non-auto BTs \cite[Table 3]{Atk08}, however, these will not contain a spectral parameter.

\subsubsection[Counting $N$-component extensions of ABS lattice equations]{Counting $\boldsymbol{N}$-component extensions of ABS lattice equations}\label{sec-2-3}

\subsubsection*{D4 symmetry}
The ABS lattice equations are D4 symmetric, i.e.,
\begin{gather}\label{D4-ABS}
 Q\big(u,\t{u},\h{u},\h{\t{u}};p,q\big)=\pm Q\big(u,\h{u},\t{u},\h{\t{u}};q,p\big)=\pm Q\big(\h{u},\th u, u, \t{u};p,q\big).
\end{gather}
In (\ref{D4-ABS}) we can consider $u$ and its shifts to be diagonal matrices \eqref{u-form}. By relabelling we also have
\begin{align*}
Q\big(u,T_a\t{u},T_b\h{u},T_{a+b}\h{\t{u}};p,q\big)=& \pm Q\big(u,T_b\h{u},T_a\t{u},T_{a+b}\h{\t{u}};q,p\big)\nonumber\\
= & \pm Q\big(T_a\t{u},u,T_{a+b}\h{\t{u}},T_b\h{u};p,q\big)
=\pm Q\big(T_b\h{u},T_{a+b}\h{\t{u}},u,T_a\t{u};p,q\big).
\end{align*}

Now we introduce $v(n,m)=u(-n,m)$, and in terms of $v$ we write \eqref{QQ} as
\begin{gather*}
Q\big(v, T_{a} \dt v, T_{b}\h v, T_{a+b} \h{\dt v}; p,q\big)=0.
\end{gather*}
Applying a tilde-shift, by virtue of D4 symmetry we obtain
\begin{align*}
 Q\big(\t v, T_{a} v, T_{b}\th v, T_{a+b} \h v; p,q\big)=0 &
 \Rightarrow Q\big(T_{a} v, \t v, T_{a+b}\h v, T_{b} \th v; p,q\big)=0\\
& \Rightarrow T_a Q\big( v, T_{-a} \t v, T_{b}\h v, T_{b-a} \th v; p,q\big)=0\\
& \Rightarrow Q\big( v, T_{-a} \t v, T_{b}\h v, T_{b-a} \th v; p,q\big)=0.
\end{align*}
Since $T_{-a}=T_{N-a}$, the above relation indicates that
$N[a,b]$ and $N[N-a,b]$ generate same $N$-component system up to reflection $R_1\colon n \leftrightarrow -n$. This leads to the following proposition.

\begin{Proposition}\label{P1}
For the ABS equation \eqref{Q}, due to D4 symmetry, the cases
\[N[a, b],\quad N[b,a], \quad N[a, N-b], \quad N[N-a, b],\quad N[N-a, N-b]\]
are all equivalent up to coordinate refections. Consequently, we can assume $0\leq a\leq b\leq c\leq N/2$ without loss of generality.
\end{Proposition}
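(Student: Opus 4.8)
The plan is to realise each listed equivalence as the action of a coordinate reflection on the $N[a,b]$ system \eqref{QQ}, using the D4 symmetry \eqref{D4-ABS} to rewrite the reflected equation back in the canonical form $Q(u,T_{a'}\t u,T_{b'}\h u,T_{a'+b'}\th u;p,q)=0$. The ingredients are exactly three: the relabelled D4 identities displayed immediately before the statement; the conjugation identity $\theta g(m_1,\dots,m_n)\theta^{-1}=g(\theta m_1\theta^{-1},\dots,\theta m_n\theta^{-1})$ from the proof of Theorem~\ref{T1}, which permits pulling an overall cyclic transformation $T_k$ through $Q$; and the fact that applying $T_k$ to an equation (conjugating it by $\sigma^k$) does not change its solution set.

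The move $a\mapsto N-a$ is the reflection $R_1\colon n\leftrightarrow -n$ carried out in the computation just above the statement: setting $v(n,m)=u(-n,m)$, applying a tilde-shift, using the tilde-direction reflection (the second relabelled D4 identity, swapping the first pair and the last pair of arguments), and discarding the overall $T_a$ yields $Q(v,T_{-a}\t v,T_b\h v,T_{b-a}\th v;p,q)=0$; since $-a\equiv N-a$ and $(N-a)+b\equiv b-a\pmod N$, this is precisely the $N[N-a,b]$ system. First I would note that the reflection $R_2\colon m\leftrightarrow -m$ is handled identically, now invoking the hat-direction reflection (the third relabelled D4 identity, which interchanges $u$ with $\h u$) in place of the tilde-direction one, and produces the $N[a,N-b]$ system; composing $R_1$ with $R_2$ then gives $N[N-a,N-b]$.

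Next I would treat the swap $a\leftrightarrow b$, which is the reflection across the diagonal interchanging the two lattice directions. Here the first relabelled D4 identity rewrites \eqref{QQ} as $\pm Q(u,T_b\h u,T_a\t u,T_{a+b}\th u;q,p)=0$, and renaming the tilde- and hat-directions together with $p\leftrightarrow q$ turns this into the $N[b,a]$ system. Together with the two sign-change moves this establishes that all five cases agree up to coordinate reflections. For the concluding normalisation one passes to the three-parameter cube system $N[a,b,c]$ of \eqref{consist-2'}: since every face of the ABS cube carries a copy of the same $Q$ (with lattice parameters drawn from $\{p,q,r\}$), the three lattice directions stand on an equal footing, so the bar-direction reflection $R_3$ sends $c\mapsto N-c$ by the same argument, while any permutation of the directions (with the matching permutation of $p,q,r$) permutes $(a,b,c)$. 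Applying the three sign-change moves brings each of $a,b,c$ into $[0,N/2]$, and the permutation moves then order them $a\le b\le c$, giving $0\le a\le b\le c\le N/2$.

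The hard part will be purely the bookkeeping: at each reflection one must check that the shift labels on all four vertices transform consistently---in particular that the corner label $T_{a+b}$ is sent to the corner label demanded by the target system---and that the overall conjugation factor $T_k$ may indeed be dropped. This is exactly the verification the text performs in detail for $R_1$, and every remaining case reduces to the same check, with the conjugation identity of Theorem~\ref{T1} guaranteeing that pulling $T_k$ through the fractional-affine-linear $Q$ is legitimate.
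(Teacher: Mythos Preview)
Your proposal is correct and follows essentially the same approach as the paper: the text immediately preceding the proposition carries out the $R_1$ reflection $n\mapsto -n$ in full detail, obtaining $N[a,b]\sim N[N-a,b]$, and then simply states the proposition as a consequence. You have faithfully reproduced that computation and explicitly supplied the analogous $R_2$ and diagonal-swap arguments that the paper leaves to the reader, as well as the extension to the third parameter $c$ via the cube system; none of this departs from the paper's method.
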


\subsubsection*{Decoupling}
Let us consider the $4[2,2]$ extension of \eqref{Q}:
\begin{subequations}\label{4c}
\begin{gather}
 Q\big(u_1,\t{u}_3,\h{u}_3,\th u_1;p,q\big)=0, \label{4c-a}\\
 Q\big(u_2,\t{u}_4,\h{u}_4,\th u_2;p,q\big)=0, \label{4c-b}\\
 Q\big(u_3,\t{u}_1,\h{u}_1,\th u_3;p,q\big)=0, \label{4c-c}\\
 Q\big(u_4,\t{u}_2,\h{u}_2,\th u_4;p,q\big)=0. \label{4c-d}
\end{gather}
\end{subequations}
This system is decoupled into two $2[1,1]$ systems, namely (\ref{4c-a}), (\ref{4c-c}) and (\ref{4c-b}), (\ref{4c-d}).

In the next theorem, for which we include a proof in Appendix~\ref{adec}, we give conditions which decide when a system is decoupled or non-decoupled. The greatest common divisor between integers $a,b,\ldots,c$ will be denoted $\operatorname{gcd}(a,b,\ldots,c)$.

\begin{Theorem}\label{T2}
Let $d=\operatorname{gcd}(a,b,N)$. If $d>1$ the $N[a,b]$ extension \eqref{QQ} can be decomposed into $d$ sets of $s[a/d, b/d]$ form of the equation \eqref{Q}, where $s=N/d$. If $d=1$ the system is non-decoupled.
\end{Theorem}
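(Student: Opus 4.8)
The plan is to expand the $N[a,b]$ extension \eqref{QQ} into its scalar components and then read off the coupling structure purely group-theoretically. By \eqref{T} the $i$th diagonal entry of $T_k u$ is $u_{i+k}$ (indices mod $N$), so \eqref{QQ} is equivalent to the $N$ scalar equations $Q(u_i,\t u_{i+a},\h u_{i+b},\th u_{i+a+b})=0$, one for each $i$ (mod $N$). The key observation is that the equation labelled by $i$ couples only those components whose indices lie in the coset $i+\langle a,b\rangle$, where $\langle a,b\rangle$ denotes the subgroup of $\mathbb{Z}/N\mathbb{Z}$ generated by $a$ and $b$.

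First I would invoke the standard fact that in a cyclic group $\langle a,b\rangle=\langle d\rangle$ with $d=\operatorname{gcd}(a,b,N)$, a subgroup of order $s=N/d$ and group-theoretic index $d$. Since $a\equiv b\equiv 0\pmod d$, every index $i,i+a,i+b,i+a+b$ occurring in the $i$th equation is congruent to $i$ modulo $d$. Hence the $N$ equations fall into $d$ disjoint families indexed by the residue of $i$ modulo $d$, and no component from one residue class ever appears in an equation belonging to another; this is the decoupling.

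To identify each family, I would relabel the $s$ components with indices $\equiv r\pmod d$, namely $r,r+d,\dots,r+(s-1)d$, by setting $w_k:=u_{r+kd}$ with $k$ read mod $s$. Because $d\mid a$ and $d\mid b$, the $(r+kd)$th equation becomes $Q(w_k,\t w_{k+a/d},\h w_{k+b/d},\th w_{k+a/d+b/d})=0$, which is exactly the $s[a/d,b/d]$ extension of \eqref{Q}. Running $r$ over $0,\dots,d-1$ exhibits the $d$ copies, finishing the case $d>1$.

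For the converse claim ($d=1\Rightarrow$ non-decoupled) I would make precise what decoupling means by introducing the coupling graph $G$ on the vertex set $\{1,\dots,N\}$ in which two indices are adjacent whenever they co-occur in some component equation; a system is non-decoupled precisely when $G$ is connected. The $i$th equation supplies in particular the edges $\{i,i+a\}$ and $\{i,i+b\}$, so the connected component of any vertex is its coset modulo $\langle a,b\rangle=\langle d\rangle$, and $G$ has exactly $d$ connected components. When $d=1$ this subgroup is all of $\mathbb{Z}/N\mathbb{Z}$, so $G$ is connected and no nontrivial splitting is possible. The hard part is this last step: one must fix a definition of \emph{non-decoupled} that genuinely captures the intended meaning and then reduce it to connectivity of $G$; by contrast the decoupling direction is only coset bookkeeping and relabelling.
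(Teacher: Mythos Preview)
Your proposal is correct and, for the case $d>1$, essentially identical to the paper's argument: both partition the $N$ scalar equations $Q(u_k,\t u_{k+a},\h u_{k+b},\th u_{k+a+b})=0$ by the residue of $k$ modulo $d$ and relabel within each class to exhibit an $s[a/d,b/d]$ system.

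For $d=1$ your route differs from the paper's. The paper proceeds by a case split ($a=0$, $a=b$, $0<a<b$), in each case starting from a single equation and chasing variable dependencies forward to show the orbit covers all $N$ equations; the last case appeals to an auxiliary Bezout-type lemma (Lemma~\ref{L2}) guaranteeing a representation $1=i_0a+j_0b+k_0N$ with $i_0,j_0>0$. You instead package everything into the coupling graph and invoke the single group-theoretic fact $\langle a,b\rangle=\langle\gcd(a,b,N)\rangle$ in $\mathbb{Z}/N\mathbb{Z}$, so connectivity is immediate without cases. This is cleaner and more uniform; what the paper's more pedestrian argument buys is that it is entirely self-contained (it proves its own Bezout lemma) and makes explicit the forward-chaining through shared variables, which matches the operational meaning of ``decoupled'' used implicitly there. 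Your closing remark that the definition of non-decoupled must be pinned down and reduced to graph connectivity is well taken; the paper adopts exactly that notion (a proper subset of equations depending only on a proper subset of variables), so your graph formulation is a faithful rephrasing.
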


It follows that if $N$ is prime the only decoupled case is $N[0,0]$, which corresponds to the trivial multi-component extension. For $N=6$, we have five extensions which decouple,
\[
6[0,0],\quad 6[0,2],\quad 6[0,3],\quad 6[2,2],\quad 6[3,3]\]
and five that do not decouple,
\[6[0,1],\quad 6[1,1],\quad 6[1,2],\quad 6[1,3],\quad 6[2,3].
\]
We let $\alpha_N$ denote the number of $N$-component
extensions \eqref{QQ} that decouple, and we let $\beta_N$ denote the number of $N$-component
extensions \eqref{QQ} that do not decouple. Thus, $\alpha_6=\beta_6=5$.

In the following theorem we give formulas for the functions $\alpha_N$ and $\beta_N$. We use notation as follows. Let $s$ be a set. By $\mathcal{P}(s)$ we denote the powerset of~$s$, $\#s$ we denote the number of elements in $s$, and $\Pi s$ denotes the product of the elements in $s$, e.g., with $s=\{1,2,3\}$ we have
\[
\mathcal{P}(s)=\{\varnothing,\{1\},\{2\},\{3\},\{1,2\},\{1,3\},\{2,3\},\{1,2,3\}\},\qquad
\#s=3,\qquad \Pi s=6,
\]
and $\#\varnothing=0$, $\Pi \varnothing=1$. Furthermore, for $n\in\mathbb{N}$ we denote the set of prime divisors of $n$ by $\mathbb{P}_n$, i.e., if $n$ has prime decomposition
$n=\prod\limits_{i=1}^l p_i^{m_i}$, then $\mathbb{P}_n=\{p_1,p_2,\ldots,p_l\}$.

\begin{Theorem}\label{T3}
For any given positive integer $N >1$, the numbers of decoupled and non-decoupled $N$-component ABS systems \eqref{QQ} are respectively given by
\begin{gather} \label{Tn}
 \alpha_N=\sum_{s\in\mathcal{P}(\mathbb{P}_N)\backslash \varnothing}(-1)^{\#s+1}
 \begin{pmatrix}
 \left\lfloor \frac{N}{2\Pi s}\right\rfloor+2 \\
 2
\end{pmatrix}
\end{gather}
and
\begin{gather} \label{Sn}
 \beta_N= \sum_{s\in\mathcal{P}(\mathbb{P}_N)}(-1)^{\#s}
\begin{pmatrix}
 \left\lfloor \frac{N}{2\Pi s}\right\rfloor+2 \\
 2
\end{pmatrix}
\end{gather}
where $\binom nm = \frac{n!}{m!(n-m)!}$ and $\flo\cdot$ represents the floor function.
\end{Theorem}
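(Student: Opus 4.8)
The plan is to reduce the count of $N$-component systems to a purely combinatorial problem and then resolve it by inclusion--exclusion over the prime divisors of $N$. First I would invoke Proposition~\ref{P1} to fix canonical representatives: every $N[a,b]$ extension is equivalent, up to coordinate reflections, to exactly one pair $(a,b)$ with $0\leq a\leq b\leq \lfloor N/2\rfloor$. I would check briefly that this really is a bijection onto the fundamental domain $\{(a,b): 0\le a\le b\le \lfloor N/2\rfloor\}$: the reflection $x\mapsto N-x$ carries each $x\in\{1,\dots,\lfloor N/2\rfloor\}$ strictly above $\lfloor N/2\rfloor$ except for the fixed points $x=0$ and (when $N$ is even) $x=N/2$, so no two distinct domain pairs are identified, and swapping is already resolved by $a\le b$. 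I would also note that the decoupling criterion is constant on each equivalence class, since $\gcd(a,b,N)=\gcd(N-a,b,N)=\gcd(b,a,N)$. Counting unordered pairs drawn from the $\lfloor N/2\rfloor+1$ admissible values then gives the total number of systems as the multiset coefficient $\binom{\lfloor N/2\rfloor+2}{2}$, which is exactly the $S=\varnothing$ term appearing in \eqref{Sn}.

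Next I would apply Theorem~\ref{T2}, by which a system is non-decoupled precisely when $\gcd(a,b,N)=1$. Hence $\beta_N$ counts the canonical pairs $(a,b)$ for which no prime dividing $N$ divides both $a$ and $b$, and I would sieve these by inclusion--exclusion over the set $\mathbb{P}_N$ of prime divisors of $N$. For a subset $S\subseteq\mathbb{P}_N$, the condition that every prime of $S$ divides both $a$ and $b$ is equivalent to $\Pi S\mid a$ and $\Pi S\mid b$, the primes in $S$ being distinct. Writing $P=\Pi S$, the multiples of $P$ inside $\{0,1,\dots,\lfloor N/2\rfloor\}$ are $0,P,2P,\dots$, and by the floor identity $\lfloor\lfloor N/2\rfloor/P\rfloor=\lfloor N/(2P)\rfloor$ there are exactly $\lfloor N/(2P)\rfloor+1$ of them; the unordered pairs among these number $\binom{\lfloor N/(2P)\rfloor+2}{2}$. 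Inclusion--exclusion over $\mathcal{P}(\mathbb{P}_N)$ then produces formula~\eqref{Sn} verbatim.

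Finally $\alpha_N$ follows by subtraction, $\alpha_N=\binom{\lfloor N/2\rfloor+2}{2}-\beta_N$, since every system is either decoupled or not: isolating the $S=\varnothing$ term of \eqref{Sn} and flipping the signs of the remaining summands turns $\sum_{S}(-1)^{\#S}(\cdots)$ into $\sum_{S\neq\varnothing}(-1)^{\#S+1}(\cdots)$, which is \eqref{Tn}. The only step demanding genuine care is the bookkeeping of the fundamental domain in the first paragraph — verifying that the boundary values $a=0$ and (for even $N$) $a=N/2$ are counted once and that distinct canonical pairs are truly inequivalent — after which both the multiset count and the prime sieve are routine.
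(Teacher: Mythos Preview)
Your proof is correct and follows the same approach as the paper: count canonical pairs in the fundamental domain $0\le a\le b\le\lfloor N/2\rfloor$ and apply inclusion--exclusion over the prime divisors of $N$ via Theorem~\ref{T2}. The only cosmetic difference is that the paper sieves for $\alpha_N$ first and obtains $\beta_N$ by subtraction, while you do the reverse; your extra care in verifying that the fundamental domain contains exactly one representative per equivalence class, and the floor identity $\lfloor\lfloor N/2\rfloor/P\rfloor=\lfloor N/(2P)\rfloor$, are welcome points of rigour that the paper leaves implicit.
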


Formulas (\ref{Tn}) and (\ref{Sn}) yield
\begin{gather*}
\alpha_{2,3,\ldots,20}=1,\,1,\,3,\,1,\,5,\,1,\,6,\,3,\,8,\,1,\,13,\,1,\,12,\,8,\,15,\,1,\,22,\,1,\,24, \\
\beta_{2,3,\ldots,20}=2,\,2,\,3,\,5,\,5,\,9,\,9,\,12,\,13,\,20,\,15,\,27,\,24,\,28,\,30,\,44,\,33,\,54,\,42,
\end{gather*}
and we note that $\alpha_{2n}+\beta_{2n}=\alpha_{2n+1}+\beta_{2n+1}=\binom{n+2}2$, and $\alpha_p=1$ when $p$ is prime.

\begin{proof}Due to $0\leq a\leq b\leq N/2$, the total number of $N[a,b]$ extensions is
\begin{gather} \label{Tot}
\alpha_N+\beta_N=\binom{\flo{N/2}+1}{2}+\binom{\flo{N/2}+1}{1}=\binom{\flo{N/2}+2}2.
\end{gather}
First consider the decoupled case. For any $p\in\mathbb{P}_N$, if
\[
a,b\in \{0, p, 2p, 3p, \dots, \flo{N/(2p)} p\}
\]
then $p$ is a divisor of $\operatorname{gcd}(a,b,N)$. By Theorem \ref{T2} this gives rise to $\binom{\flo{N/(2p)}+2}2$ decoupled $N$-component systems. Similarly, for
$q\in\mathbb{P}_N$ with $q\neq p$ we find $\binom{\flo{N/(2q)}+2}2$ decoupled $N$-component systems. However, a number of these
systems we would have already counted, namely the $\binom{\flo{N/(2pq)}+2}2$ systems where
\[
a,b\in \{0, pq, 2pq, 3pq, \dots, \flo{N/(2pq)} pq\}.
\]
Running through all the primes in $\mathbb{P}_N$ by the inclusion-exclusion principle one finds the formula~(\ref{Tn}) for $\alpha_N$. Due to~(\ref{Tot}) the formula for $\beta_N$ is then given by~\eqref{Sn}.
\end{proof}

\subsection{Multi-component CAC 3D lattice equations}\label{sec-6}
3D lattice equations defined on a 3D cube can be consistent around a 4D cube. In affine linear case, certain 8-point and 6-point lattice equations, see Fig.~\ref{F:7}, have been verified to be CAC in this sense \cite{ABS03,ABS-IMRN-2011}.
We mention in particular the lattice AKP equation (a.k.a.\ the Hirota equation~\cite{Hir})
\begin{gather}\label{AKP}
\alpha_1 \t u\bh u + \alpha_2 \h u\bt u + \alpha_3 \b u\th u=0
\end{gather}
and the lattice BKP equation (a.k.a.\ the Miwa equation \cite{Miwa-1982})
\begin{align}\label{BKP}
\alpha_1 \t u\bh u + \alpha_2 \h u\bt u + \alpha_3 \b u\th u +\alpha_4 u\th{\b u}=0.
\end{align}
One can include arbitrary coefficients $\{\alpha_j\}$ in the equations,
which can be gauged to any nonzero value~\cite{Nimmo-JPA-1997}.
The stencils of these two equations are depicted in Fig.~\ref{F:7}.

\begin{figure}[h]\centering
\setlength{\unitlength}{0.0005in}
\begin{picture}(3482,3700)(0,-500)
\put(450,1883){\circle*{150}}
\put(-100,1883){\makebox(0,0)[lb]{${\bt u}$}}
\put(1275,2708){\circle*{150}}
\put(825,2708){\makebox(0,0)[lb]{$\b u$}}
\put(3075,2708){\circle*{150}}
\put(3375,2633){\makebox(0,0)[lb]{${\bh u}$}}
\put(2250,83){\circle*{150}}
\put(2650,8){\makebox(0,0)[lb]{${\th u}$}}
\put(450,83){\circle*{150}}
\put(0,8){\makebox(0,0)[lb]{$\t u$}}
\put(3075,908){\circle*{150}}
\put(3300,833){\makebox(0,0)[lb]{$\h u$}}
\drawline(1275,2708)(3075,2708)
\drawline(1275,2708)(450,1883)
\drawline(450,1883)(450,83)
\drawline(3075,2708)(450,1883)
\drawline(3075,2708)(2250,83)
\drawline(450,1883)(2250,83)
\drawline(3075,2708)(3075,908)
\dashline{60.000}(1275,2708)(450,83)
\dashline{60.000}(1275,2708)(3075,908)
\drawline(450,83)(2250,83)
\drawline(3075,908)(2250,83)
\dashline{70.000}(3075,908)(450,83)
\put(-500,-700){\small{(a) 6-point stencil of lattice AKP \eqref{AKP}.} }
\end{picture}
\hspace{2cm}
\begin{picture}(3482,3700)(0,-500)
\put(450,1883){\circle*{150}}
\put(-100,1883){\makebox(0,0)[lb]{${\bt u}$}}
\put(1275,2708){\circle*{150}}
\put(825,2708){\makebox(0,0)[lb]{$\b u$}}
\put(3075,2708){\circle*{150}}
\put(3375,2633){\makebox(0,0)[lb]{${\bh u}$}}
\put(2250,83){\circle*{150}}
\put(2650,8){\makebox(0,0)[lb]{${\th u}$}}
\put(1275,908){\circle*{150}}
\put(825,908){\makebox(0,0)[lb]{$u$}}
\put(2250,1883){\circle*{150}}
\put(1950,2000){\makebox(0,0)[lb]{${\th{\b u}}$}}
\put(450,83){\circle*{150}}
\put(0,8){\makebox(0,0)[lb]{$\t u$}}
\put(3075,908){\circle*{150}}
\put(3300,833){\makebox(0,0)[lb]{$\h u$}}
\drawline(1275,2708)(3075,2708)
\drawline(1275,2708)(450,1883)
\drawline(450,1883)(450,83)
\drawline(3075,2708)(2250,1883)
\drawline(450,1883)(2250,1883)
\drawline(3075,2633)(3075,908)
\dashline{60.000}(1275,908)(450,83)
\dashline{60.000}(1275,908)(3075,908)
\drawline(2250,1883)(2250,83)
\drawline(450,83)(2250,83)
\drawline(3075,908)(2250,83)
\dashline{60.000}(1275,2633)(1275,908)
\put(-380,-700){{(b) 8-point stencil of lattice BKP \eqref{BKP}. }}
\end{picture}
\caption{Stencils of 3D lattice equations.}\label{F:7}
\end{figure}
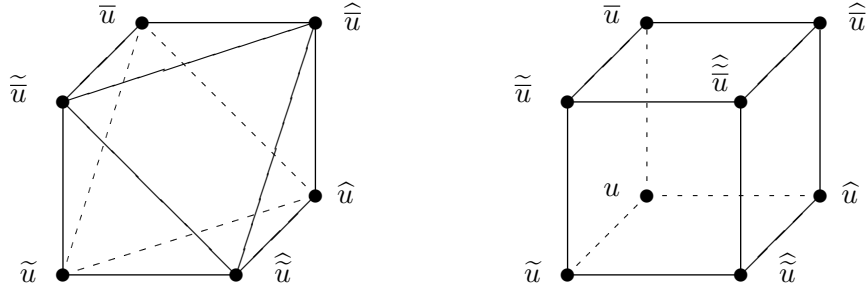

Any 3D CAC lattice equation can be generalised to a multi-component 3D equation which inherits the CAC property. We have the following result.

\begin{Theorem}\label{T5}Suppose that the scalar lattice system
\begin{alignat*}{3}
&Q\big(u,\t u,\h u, \b u, \th u, \tb u, \hb u,\thb u\big)=0,\qquad && Q\big(\dot u,\dot{\t u},\dot{\h u}, \dot{\b u}, \dot{\th u}, \dot{\tb u}, \dot{\hb u},\dot{\thb u}\big)=0,&\\
&A\big(u,\t u,\h u, \dot u, \th u, \dot{\t u}, \dot{\h u},\dot{\th u}\big)=0,\qquad && A\big(\b u,\tb u,\hb u, \dot{\b u}, \thb u, \dot{\tb u}, \dot{\hb u},\dot{\thb u}\big)=0,&\\
&B\big(u,\t u,\dot u, \b u, \dot{\t u}, \tb u, \dot{\b u},\dot{\tb u}\big)=0,\qquad &&B\big(\h u,\th u,\dot u, \hb u, \dot{\th u}, \thb u, \dot{\hb u},\dot{\thb u}\big)=0,&\\
&C\big(u,\dot u,\h u, \b u, \dot{\h u}, \dot{\b u}, \hb u,\dot{\hb u}\big)=0,\qquad && C\big(\t u,\dot{\t u},\th u, \tb u, \dot{\th u}, \dot{\tb u}, \thb u,\dot{\thb u}\big)=0&
\end{alignat*}
is consistent around the 4D cube in Fig.~{\rm \ref{F:8}}, i.e., the value of
$\dot{\thb u}$ is uniquely determined by suitably given initial values.
Then, after replacing $u$ with diagonal form \eqref{u-form}, the following system
\begin{subequations}\label{consist-h2}
\begin{gather}
 Q\big(u, T_a \t u, T_b\h u, T_c\b u, T_{a+b}\th u, T_{a+c}\tb u, T_{b+c}\hb u,T_{a+b+c}\thb u\big)=0,\label{QQ3D}\\
 Q\big(T_d \dot u, T_{a+d}\dot{\t u},T_{b+d} \dot{\h u}, T_{c+d} \dot{\b u}, T_{a+b+d}\dot{\th u}, T_{a+c+d}\dot{\tb u}, T_{b+c+d}\dot{\hb u},
 T_{a+b+c+d}\dot{\thb u}\big)=0,\\
 A\big(u,T_{a}\t u,T_{b} \h u, T_{d}\dot u,T_{a+b} \th u, T_{a+c}\dot{\t u}, T_{b+d}\dot{\h u}, T_{a+b+d}\dot{\th u}\big)=0,\\
 A\big(T_{c}\b u,T_{a+c}\tb u,T_{b+c}\hb u, T_{c+d}\dot{\b u}, T_{a+b+c}\thb u, T_{a+b+d}\dot{\tb u},T_{b+c+d} \dot{\hb u},T_{a+b+c+d}\dot{\thb u}\big)=0,\\
 B\big(u,T_{a}\t u,T_{d}\dot u, T_{c}\b u, T_{a+d}\dot{\t u}, T_{a+c}\tb u, T_{c+d}\dot{\b u},T_{a+c+d}\dot{\tb u}\big)=0,\\
 B\big(T_{b}\h u,T_{a+b}\th u,T_{d}\dot u, T_{b+c}\hb u, T_{a+b+d}\dot{\th u}, T_{a+b+c}\thb u, T_{b+c+d} \dot{\hb u},T_{a+b+c+d}\dot{\thb u}\big)=0,\\
 C\big( u,T_{d}\dot u,T_{b}\h u, T_{c}\b u, T_{b+d}\dot{\h u}, T_{c+d}\dot{\b u}, T_{b+c}\hb u,T_{b+c+d}\dot{\hb u}\big)=0,\\
 C\big(T_{a}\t u,T_{a+d}\dot{\t u},T_{a+b}\th u, T_{a+c}\tb u, T_{a+b+d}\dot{\th u},T_{a+c+d} \dot{\tb u}, T_{a+b+c}\thb u,T_{a+b+c+d}\dot{\thb u}\big)=0
\end{gather}
\end{subequations}
is also consistent around the 4D cube.
\end{Theorem}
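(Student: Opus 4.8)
The plan is to prove this exactly as Lemma~\ref{T0} was proved, namely by decoupling the matrix system \eqref{consist-h2} into its $N$ diagonal components and recognising each component as a relabelled copy of the scalar 4D-consistent system in the hypothesis. With $u$ and its shifts diagonal as in \eqref{u-form} all vertex variables commute, and since each of the eight functions $Q,A,B,C$ is affine linear in every argument, every equation of \eqref{consist-h2} is again a diagonal matrix equation that splits into $N$ scalar equations, one per diagonal entry. Working mod $N$, conjugation by $\sigma$ acts on a diagonal matrix by $(T_k u)_{ii}=u_{i+k}$, so the $i$-th entry of the first equation \eqref{QQ3D} reads
\[
Q\big(u_i,\t{u}_{i+a},\h{u}_{i+b},\b{u}_{i+c},\th{u}_{i+a+b},\tb{u}_{i+a+c},\hb{u}_{i+b+c},\thb{u}_{i+a+b+c}\big)=0,
\]
and analogously for the remaining seven equations.

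The crucial step is then to identify these eight families of scalar equations. Each vertex of the 4D cube carries a single additive label $\ell\in\{0,a,b,c,d,a+b,\ldots,a+b+c+d\}$ that is the same in every cubic cell the vertex belongs to; this is exactly the additive labelling of Theorem~\ref{T1}, and one checks from \eqref{consist-h2} that, e.g., $\th u$ always enters as $T_{a+b}\th u$. Hence the substitution sending the field at the vertex with label $\ell$ to its $(i+\ell)$-th component is globally well defined, and under it the eight component equations become \emph{precisely} the scalar system of the hypothesis with all fields relabelled. Verifying this coincidence on all eight cells simultaneously is the 4D analogue of the matching already carried out in the proof of Lemma~\ref{T0}; I expect this index bookkeeping to be the only real obstacle, and it is entirely mechanical.

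Finally, because the hypothesis system is consistent around the 4D cube, so is each of its relabelled copies: for every $i$ the entry $\dot{\thb u}_{i+a+b+c+d}$, which is the $i$-th diagonal entry of $T_{a+b+c+d}\dot{\thb u}$, is uniquely determined by the \emph{same} determining function applied to the relabelled initial data. As this holds for all $i$, the diagonal matrix $T_{a+b+c+d}\dot{\thb u}$ is uniquely determined by the initial data, so \eqref{consist-h2} is consistent around the 4D cube.
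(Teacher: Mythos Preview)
Your proposal is correct and is exactly the approach the paper intends. In fact the paper gives no explicit proof of Theorem~\ref{T5} at all: it states the result and passes straight to the examples \eqref{AKP-2c}, \eqref{BKP-2c}, relying on the reader to carry over the relabelling argument from Lemma~\ref{T0} (and, for the lattice placement, Theorem~\ref{T1}) to the 4D setting. Your write-up does precisely this, and even supplies the one sentence the paper omits, namely that the additive vertex labels $\ell\in\{0,a,b,c,d,\dots,a+b+c+d\}$ are globally consistent across all eight 3D cells so that the component substitution is well defined.
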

We will call (abusing notation) equation \eqref{QQ3D} the $N[a,b,c]$ extension of the scalar equation
\begin{gather}\label{Q3D}
Q\big(u,\t u,\h u, \b u, \th u, \tb u, \hb u,\thb u\big)=0.
\end{gather}

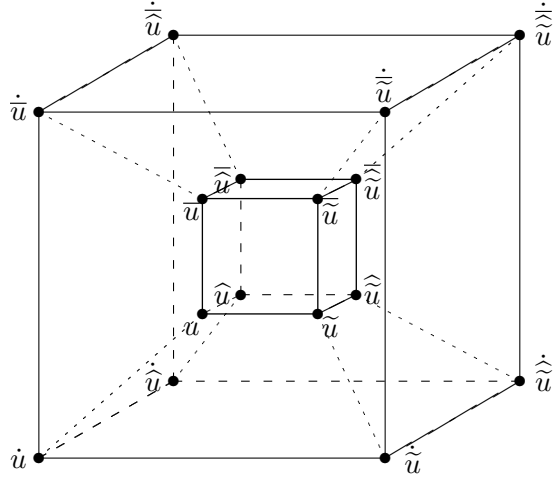
\begin{figure}\centering
\setlength{\unitlength}{0.0010in}
\hspace{-70mm}\begin{picture}(2400,2400)(-1500,-50)
 \put(1800, 0){\circle*{60}} \put(0 ,1800){\circle*{60}}
 \put( 700, 400){\circle*{60}} \put(700,2200){\circle*{60}}
 \put( 0, 0){\circle*{60}} \put(1800,1800){\circle*{60}}
 \put( 2500,400){\circle*{60}} \put(2500, 2200){\circle*{60}}
 \put( 0, 0){\line(1,0){1800}}
 \put( 0,1800){\line(1,0){1800}}
 \put(700,2200){\line(1,0){1800}}
 \put( 0, 0){\line(0,1){1800}}
 \put(1800, 0){\line(0,1){1800}}
 \put(2500,400){\line(0,1){1800}}

\drawline( 0,1800)(700,2200)
\drawline(1800,1800)(2500, 2200)
\drawline(1800, 0)(2500, 400)
\dashline{50}(0, 0)(700, 400)
 \dashline{50}(700,400)(0,0)
 \dashline{50}(700,400)(2500,400)
\dashline{50}(700,400)(700,2200)

\drawline(850, 750)(1450 ,750)
\drawline(1650, 850)(1450 ,750)
\drawline(850, 750)(850,1350)
\drawline(1450, 1350)(850,1350)
\drawline(1450, 1350)(1450,750)
\drawline(1050, 1450)(850,1350)
\drawline(1050, 1450)(1650,1450)
\drawline(1450, 1350)(1650,1450)
\drawline(1650, 850)(1650,1450)
\dashline{50}(850, 750)(1050 ,850)
\dashline{50}(1650, 850)(1050 ,850)
\dashline{50}(1050, 1450)(1050 ,850)

\put(850, 750){\circle*{60}} \put(1450 ,750){\circle*{60}}
 \put(1650, 850){\circle*{60}} \put(850,1350){\circle*{60}}
 \put(1450, 1350){\circle*{60}} \put(1650,1450){\circle*{60}}
 \put(1050, 1450){\circle*{60}} \put(1050 ,850){\circle*{60}}
\dashline{18}(1800, 0)(1450, 750)
\dashline{18}(0, 0)(850, 750)
\dashline{18}(700, 400)(1050, 850)
\dashline{18}(2500,400)(1650, 850)
\dashline{18}(2500,2200)(1650, 1450)
\dashline{18}(700, 2200)(1050, 1450)
\dashline{18}(1800, 1800)(1450, 1350)
\dashline{18}(0, 1800)(850, 1350)
 \put(-150,-50){$\dot u$}
 \put(1900, -80 ){$\dot{\t u}$}
 \put(1470, 630){$\t u$}
 \put(750, 640){$u$}

 \put(550, 350){$\dot{\h{u}}$}
 \put(910, 800){$\h u$}
 \put(2580,350){$\dot{\th u}$}
 \put(1690, 800){$\th u$}

 \put(2580,2150){$\dot{\thb{u}}$}
 \put(1690, 1350){$\thb u$}
 \put(550, 2200){$\dot{\hb u}$}
 \put(910, 1380){$\hb u$}

 \put(1750, 1850){$\dot{\tb{u}}$}
 \put(1470, 1220){$\tb u$}
 \put(-150, 1760){$\dot{\b u}$}
 \put(750, 1240){$\b u$}
\end{picture}
\caption{4D consistency around a hypercube.}\label{F:8}
\end{figure}

As examples, the $2[1,1,1]$ extension of the AKP equation \eqref{AKP} is
\begin{subequations}\label{AKP-2c}
\begin{gather}
\alpha_1 \t v\bh u + \alpha_2 \h v\bt u + \alpha_3 \b v\th u=0, \label{AKP-2c-1}\\
\alpha_1 \t u\bh v + \alpha_2 \h u\bt v + \alpha_3 \b u\th v=0, \label{AKP-2c-2}
\end{gather}
\end{subequations}
and the $2[1,1,1]$ extension of the BKP equation \eqref{BKP} is
\begin{subequations}\label{BKP-2c}
\begin{gather}
\alpha_1 \t v\bh u + \alpha_2 \h v\bt u + \alpha_3 \b v\th u+\alpha_4 u\th{\b v}=0, \label{BKP-2c-1}\\
\alpha_1 \t u\bh v + \alpha_2 \h u\bt v + \alpha_3 \b u\th v+\alpha_4 v\th{\b u}=0. \label{BKP-2c-2}
\end{gather}
\end{subequations}
Both of them are 4D consistent by virtue of Theorem~\ref{T5}.

\section{Reductions}\label{sec-4}

\subsection{Nonlocal systems}\label{sec-4-1}

A few years ago, nonlocal integrable systems were introduced by Ablowitz and Musslimani~\cite{AM}. They studied the nonlocal nonlinear Schr\"{o}dinger equation
\[iq_t(x,t)+q_{xx}(x,t)+2 q^2(x,t)q^*(-x,t)=0,\]
where $i$ is the imaginary unit and $q^*$ denotes the complex conjugate of $q$. The equation is called nonlocal as it involves functions which depend on points $-x$, $x$ which are far apart. Most nonlocal integrable systems are continuous or semi-discrete. In this section we show that nonlocal fully discrete integrable equations can be constructed as reductions of 2-component ABS systems.

For a $2[0,1]$ ABS system, \eqref{2c1}, that is a system of the form
\begin{subequations}\label{2c1xy}
\begin{gather}
Q\big(u, \t u, \h v, \th v;p,q\big)=0,\label{2c1xy1} \\
Q\big(v, \t v, \h u, \th u;p,q\big)=0,\label{2c1xy2}
\end{gather}
\end{subequations}
where $u$ and $v$ are scalar functions of $(n,m)$. Introducing the relation
\begin{gather}\label{reduc-1}
v(n,m)=u(-n,m),
\end{gather}
\eqref{2c1xy1} reduces to
\begin{gather}\label{nonl-1}
 Q(u(n,m), u(n+1,m), u(-n,m+1), u(-n-1,m+1); p,q)=0.
\end{gather}
Equation (\ref{nonl-1}) is a nonlocal ABS equation. In nonlocal reduction, usually the coupled system does not collapse to one single equation, but solutions of the second equation can be obtained from the first one. Here, if $u(n,m)$ is a solution of (\ref{nonl-1}), then $v(n,m)$ given by~\eqref{reduc-1} is solution of the equation reduced from~(\ref{2c1xy2}). This is similar to the continuous case~\cite{AM}.

A $2[1,1]$ ABS system \eqref{2c2}, i.e.,
\begin{subequations}\label{2c2xy}
\begin{gather}
Q\big(u, \t v, \h v, \th u;p,q\big)=0, \label{2c2xy1}\\
Q\big(v, \t u, \h u, \th v;p,q\big)=0,\label{2c2xy2}
\end{gather}
\end{subequations}
admits the reduction
\begin{gather*}
v(n,m)=u(-n,-m),
\end{gather*}
which reduces \eqref{2c2xy1} to the nonlocal equation
\begin{gather}\label{nonl-2}
 Q(u(n,m), u(-n-1,-m), u(-n,-m-1), u(n+1,m+1); p,q)=0.
\end{gather}

As examples, the nonlocal H1 equation of type \eqref{nonl-1} reads
\begin{gather}\label{nlH1-1}
(u(n,m)-u(-n-1,m+1))(u(n+1,m)-u(-n,m+1))+q-p=0,
\end{gather}
and the nonlocal H1 equation of type \eqref{nonl-2} is
\begin{gather}\label{nlH1-2}
(u(n,m)-u(n+1,m+1))(u(-n-1,-m)-u(-n,-m-1))+q-p=0.
\end{gather}

We note that for any ABS equation which is invariant under the transformation $u\to -u$, its 2-component extension \eqref{2c1xy} allows nonlocal reduction by introducing
\begin{gather*}
v(n,m)=\epsilon u(-n,m), \qquad \epsilon=\pm 1,
\end{gather*}
and its 2-component extension \eqref{2c2xy} allows nonlocal reduction
\begin{gather*}
v(n,m)=\epsilon u(-n,-m), \qquad\epsilon=\pm 1.
\end{gather*}

Thus, besides \eqref{nlH1-1} and \eqref{nlH1-2}, the H1 equation also has nonlocal forms
\begin{gather*}
(u(n,m)+u(-n-1,m+1))(u(n+1,m)+u(-n,m+1))+q-p=0,
\end{gather*}
and
\begin{gather}\label{nlH1-2e}
(u(n,m)-u(n+1,m+1))(u(-n-1,-m)-u(-n,-m-1))-q+p=0.
\end{gather}

\subsection{Higher order equations from eliminations}\label{sec-4-2}
Multi-component extensions can be reduced to higher order lattice equations by elimination of field components. Examples of higher order lattice equations can be found in \cite{BS,DGNS,FN,GSY,KN,LMW,NPCQ,Sur}. The first instance of such an elimination procedure appeared in the theory of integrable discretisation of holomorphic and harmonic functions \cite{Duf,Mer}. More recently such techniques were applied in discrete integrable systems, e.g., in~\cite{KN}, where
multi-quadratic relations were obtained, and related to Yang-Baxter difference systems.

\subsubsection*{The discrete Burgers equation}
An example where the elimination can be done by a single substitution, is provided by the discrete Burgers equation. Eliminating the variable $v$ in the 2[0,1] extension (\ref{210B}) yields the four point
equation
\[
\frac{\tth u \tt u - q \tt u- \tth u (p-q)}{\tth u - p} \left(p-q+\h u-\frac{\tt u \tt {\dh u} - q \tt {\dh u}- \tt u (p-q)}{\tt u - p}\right)=p \h u-q \frac{\tt u \tt {\dh u} - q \tt {\dh u}- \tt u (p-q)}{\tt u - p}.
\]
Equations on similar but larger stencils are easily obtained from $N$-component extensions.

\subsubsection*{ABS equations}
For ABS-equations, the generic form of the function $Q$ is, cf.~\cite{Via},
\begin{gather}
Q\big(u,\t u,\h u,\th u\big) =k_1u\t u\h u\th u+k_2\big(u\t u\h u+u\h u\th u+u\t u\th u+\t u\h u\th u\big)+k_3\big(\t u\h u+u\th u\big) \nonumber\\
\hphantom{Q\big(u,\t u,\h u,\th u\big) =}{} +k_4\big(u\t u+\h u\th u\big)+k_5\big(u\h u+\t u\th u\big)
 +k_6\big(u+\t u+\h u+\th u\big)+k_7,\label{QQQQ}
\end{gather}
where the coefficients depend on the lattice parameters $p$, $q$.
Due the D4 symmetry property
\[
Q\big(u,\t u,\h u,\th u;p,q\big)=\pm Q\big(\h u,\th u,u,\t u;p,q\big),
\]
there exists a function $G$ such that
\begin{gather*}
\th u=G\big(\h u, u,\t u\big),\qquad \t u= G\big(u,\h u,\th u\big).
\end{gather*}

\subsubsection*{From $\boldsymbol{2[0,1]}$ ABS equations to scalar six-point equations}
For the $2[0,1]$ ABS equation \eqref{2c1xy}, we have
\begin{gather*}
\t v=G\big(v,\dh u,\dh{\t u}\big),\qquad \t v=G\big(v,\h u,\th u\big).
\end{gather*}
The equation $G\big(v,\dh u,\dh{\t u}\big) = G\big(v,\h u,\th u\big)$ is quadratic in $v$, so there exists a rational function $F$, linear in $\omega$,
such that
\begin{gather} \label{FT}
v=F\big(\dh u, \dh{\t u}, \h u,\th u; \omega\big),
\end{gather}
with
\begin{gather}\label{w2}
\omega^2=\varphi\big(\dh u, \dh{\t u}, \h u,\th u\big).
\end{gather}
Substituting \eqref{FT} into \eqref{2c1xy1} gives rise to
\begin{gather*}
Q\big(\dh{\dt u}, \dh u, F\big(\dh{\dt u}, \dh u, \dt{\h u},\h u;\dt{\omega}\big),F\big(\dh u, \dh{\t u}, \h u,\th u;\omega\big)\big)=0,
\end{gather*}
which is multi-linear in the radicals $\omega$ and $\dt\omega$, i.e., of the form $m_{\omega,\dt\omega}:=c_1+c_2\omega+c_3\dt\omega+c_4\omega\dt\omega=0$.
Combining the different roots, using the identity
\[
m_{\omega,\dt\omega}m_{-\omega,\dt\omega}m_{\omega,-\dt\omega}m_{-\omega,-\dt\omega}
= \big( {c_1}^{2}{\omega}^{2}{{\dt\omega}}^{2}-{c_2}^{2}{\omega}^{2}
-{c_3}^{2}{{\dt\omega}}^{2}+{c_4}^{2} \big) ^{2}
- ( 2 c_1 c_4-2 c_2 c_3
 ) ^{2}{\omega}^{2}{{\dt\omega}}^{2}
\]
and substituting in the expressions for $\omega^2$, ${\dt \omega}^2$, (\ref{w2}),
one obtains a six-point equation of the form (see Fig.~\ref{F:4}(a))
\begin{gather} \label{xeq1}
H\big(\dh{\dt u}, \dt{\h u}, \dh{\t u}, \dh u,\h u,\th u\big)=0,
\end{gather}
which is multi-quadratic in each variable. Multi-quadratic CAC equations related to the ABS-equations have been studied in the literature, cf.~\cite{AN,KN}. The multi-quadratic equations in \cite{AN,KN} have been written in quadrilateral form. However, considering, e.g., d$Q1^\ast$ \cite[Table~5]{KN}, in the lpKdV variable $x$, which is related to $u$ by $u=\tilde{x}-x$, the equation lives on a six-point stencil.

For our $2[0,1]$H1 equation, we have
\[
\omega^2 =\big(\h u-\dh u\big)\big(\th u-\t{\dh u}\big)\big(\big(\h u-\dh u\big)\big(\th u-\t{\dh u}\big)+4q-4p\big),
\]
and equation \eqref{xeq1} yields the six-point equation
\begin{gather}
 \big(\th u-\h{\dt u}\big)\big(\th u-\dh{\dt u}\big)
\big(\t{\dh u}-\dh{\dt u}\big)\big(\t{\dh u}-\h{\dt u}\big)\big(\h u-\dh u\big)^2+2(p-q)\big(\th u-\h{\dt u}\big)\big(\t{\dh u}-\dh{\dt u}\big)\big(\h u-\dh u\big)\big(\th u-\t{\dh u}-\dt{\dh u}+\h{\dt u}\big)\nonumber\\
\qquad{} +(p-q)^2\big(\th u-\t{\dh u}+\dt{\dh u}-\h{\dt u}\big)^2=0.\label{201H1}
\end{gather}
To our knowledge this is a new equation. It can be written as a quadrilateral system by introducing variables $x=\tilde{u}-u$, $y=\hat{\hat{u}}-u$.

Note that due to the symmetric positions for $u$ and $v$ in~\eqref{2c1xy}, variable $v$ satisfies \eqref{xeq1} as well, and therefore system~\eqref{2c1xy} can serve as either a~Lax pair or an auto-BT for~\eqref{xeq1}.

\subsubsection*{From $\boldsymbol{2[1,1]}$ ABS equations to scalar five-point equations}
For the $2[1,1]$ ABS equation, the second component \eqref{2c2xy2} can be written variously as
\begin{gather} \label{thy}
\t v=G\big(u,\dh v,\t{\dh u}\big),\qquad \h v=G\big(\h{\dt u},\dt v,u\big),
\end{gather}
and the first component \eqref{2c2xy1} is equivalent to
\begin{gather} \label{dhy}
\dh v=G\big(\dt{\dh u},\dt v,u\big).
\end{gather}
Substituting the formulas \eqref{thy} and \eqref{dhy} into \eqref{2c2xy1} gives
\begin{gather} \label{5p}
Q\big(u,G\big(u,G(\dt{\dh u},\dt v,u),\t{\dh u}\big),G\big(\h{\dt u},\dt v,u\big),\th u\big)=0.
\end{gather}
Miraculously, when $Q$ has the form \eqref{QQQQ} the equation~\eqref{5p} factorises. One factor is biquadratic in $u$, $\dt v$ and the other is
\begin{gather}
H\big(\dt{\dh u}, \h{\dt u},\t{\dh u}, u, \th u\big) =
 \big(l_1u^2\!+l_2u+l_3\big)\big(\dt{\dh u}\t{\dh u}\h{\dt u}-\dt{\dh u}\t{\dh u}\th u-\dt{\dh u}\th u\h{\dt u}+\th u\h{\dt u}\t{\dh u}\big)
 +\big(l_2u^2\!+l_4u+l_5\big)\big(\h{\dt u}\t{\dh u}-\th u\dt{\dh u}\big)\nonumber\\
\hphantom{H\big(\dt{\dh u}, \h{\dt u},\t{\dh u}, u, \th u\big) =}{}
+\big(l_3u^2+l_5u+l_6\big)\big(\h{\dt u}+\t{\dh u}-\th u-\dt{\dh u}\big)=0\label{xeq2}
\end{gather}
with parameters defined by
\begin{gather*}
 l_1=k_1k_3-k_2^2,\qquad l_2=-k_5k_2-k_2k_4+k_1k_6+k_3k_2,\qquad l_3= -k_4k_5+k_6k_2\\
 l_4=-k_4^2+k_3^2-k_5^2+k_1k_7,\qquad l_5= k_7k_2-k_6k_4-k_5k_6+k_6k_3,\qquad l_6 =k_7k_3-k_6^2,
\end{gather*}
quadratic in $u$ and multi-linear in $\dt{\dh u}$, $\t{\dh u}$, $\h{\dt u}$, $\th u$, see Fig.~\ref{F:4}(b).
We note that~\eqref{2c2xy} provides a Lax pair as well as an auto-BT for~\eqref{xeq2}.

\begin{figure}[h]\centering
\setlength{\unitlength}{0.06em}
\begin{picture}(200,170)(0,-20)
\drawline(0,20)(140,140)
\drawline(0,140)(140,140)
\drawline(0,20)(140,20)
\drawline(140,20)(0,140)
\put(-15,10){\makebox(0,0)[lb]{$ \dt{\dh u}$}}
\put(0,20){\circle*{5}}
\put(70,6){\makebox(0,0)[lb]{$ \dh u$}}
\put(70,20){\circle*{5}}
\put(-15,130){\makebox(0,0)[lb]{$ \h{\dt u}$}}
\put(0,140){\circle*{5}}
\put(70,120){\makebox(0,0)[lb]{$ \h u$}}
\put(70,140){\circle*{5}}
\put(145,130){\makebox(0,0)[lb]{$\th u$}}
\put(140,140){\circle*{5}}
\put(145,10){\makebox(0,0)[lb]{$\t{\dh u}$}}
\put(140,20){\circle*{5}}
 \put(-30,-20){\small (a) Stencil of equaton \eqref{xeq1}}
\end{picture}
\hspace{2cm}
\begin{picture}(200,170)(0,-20)
\drawline(0,20)(0,160)
\drawline(0,160)(140,20)
\drawline(0,20)(140,160)
\drawline(140,20)(140,160)
\put(-15,10){\makebox(0,0)[lb]{$ \dt{\dh u}$}}
\put(0,20){\circle*{5}}
\put(145,10){\makebox(0,0)[lb]{$ \t{\dh u}$}}
\put(140,20){\circle*{5}}
\put(-15,150){\makebox(0,0)[lb]{$ \h{\dt u}$}}
\put(0,160){\circle*{5}}
\put(145,150){\makebox(0,0)[lb]{$\th u$}}
\put(140,160){\circle*{5}}
\put(65,75){\makebox(0,0)[lb]{$u$}}
\put(70,90){\circle*{5}}
 \put(-30,-20){\small (b) Stencil of equation \eqref{xeq2}}
\end{picture}
\caption{Equations on five- and six-point stencils are obtained from 2-component ABS equations. \label{F:4}}
\end{figure}

The 5-point equation \eqref{xeq2} is not new; it can also be obtained by elimination of the single shifts from 4 copies of (\ref{QQQQ}) on a $2\times2$ stencil, and it coincides with the discrete Toda-type equations classified by Adler in \cite{Adl1,Adl2}. This can be seen as follows. The ABS equations admit a~three-leg form \cite[Section 5]{ABS03}, either an additive form,
\begin{gather*} 
\psi(u,\t u;p)-\psi(u,\h u;q)=\phi\big(u,\th u;p, q\big),
\end{gather*}
or an multiplicative form,
\begin{gather*}
\Psi(u,\t u;p)/\Psi(u,\h u;q)=\Phi\big(u,\th u;p, q\big),
\end{gather*}
with certain functions $\psi$, $\phi$, $\Psi$ and $\Phi$. In the additive case, by means of $D4$ symmetry, the $2[1,1]$ equations we have used to derive \eqref{xeq2} can be expressed in terms of the three-leg form as
\begin{gather*}
 Q\big(u, \t v, \h v, \th u;p,q\big)=0\Leftrightarrow \psi(u,\t v;p)-\psi(u,\h v;q)=\phi\big(u,\th u;p, q\big),\\
 Q(\dt v, u, \h{\dt u}, \h v;p,q)=0 \Leftrightarrow Q( u,\dt v, \h v, \h{\dt u};p,q)=0
\Leftrightarrow \psi(u,\dt v;p)-\psi(u,\h v;q)=\phi(u,\h{\dt u};p, q),\\
 Q(\dh v, \t{\dh u}, u, \t v;p,q)=0 \Leftrightarrow Q(u, \t v, \dh v, \t{\dh u};p,q)=0
\Leftrightarrow \psi(u,\t v;p)-\psi(u,\dh v;q)=\phi(u, \t{\dh u};p, q),\\
 Q(\dt{\dh u}, \dh v, \dt v, u;p,q)=0 \Leftrightarrow Q(u,\dt v,\dh v,\dt{\dh u};p,q)=0
\Leftrightarrow \psi(u,\dt v;p)-\psi(u,\dh v;q)=\phi(u,\dt{\dh u};p, q).
\end{gather*}
Eliminating $\psi$ we arrive at a five-point lattice equation
\begin{gather*} 
\phi\big(u,\th u;p, q\big) +\phi(u,\h{\dt u};p, q)=\phi(u, \t{\dh u};p, q)+\phi(u,\dt{\dh u};p, q),
\end{gather*}
which is called a discrete Toda-type equation in \cite{Adl1,Adl2}. In the multiplicative case, we have
\begin{gather*} 
\Phi\big(u,\th u;p, q\big) \Phi(u,\dt{\dh u};p, q)= \Phi(u,\h{\dt u};p, q) \Phi(u, \t{\dh u};p, q),
\end{gather*}
which is again a discrete Toda-type equation after a transformation $\Phi={\rm e}^{\phi}$.

Explicitly, the 5-point equation derived from $2[1,1]$H1 or $2[1,1]\text{Q1}^0$ can be written as
\begin{gather*}
\frac{1}{u-\dt{\dh u}}+\frac{1}{u-\th u}=\frac{1}{u-\t{\dh u}}+\frac{1}{u-\h{\dt u}}.
\end{gather*}
This equation can also be found by elimination of the single shifts in the 7-point equation \cite[equation~(3)]{NPC}. The systems $2[1,1]$H2, $2[1,1]\text{Q1}^1$ and $2[1,1]\text{A1}^1$ give rise to
\begin{gather*}
\frac{\big(u-\th u+k\big)(u-\dt{\dh u}+k)}{\big(u-\th u-k\big)(u-\dt{\dh u}-k)}=
\frac{(u-\h{\dt u}+k)(u-\t{\dh u}+k)}{(u-\h{\dt u}-k)(u-\t{\dh u}-k)},\qquad k=p-q,
\end{gather*}
and $2[1,1]\text{H3}^\delta$, $2[1,1]$A2 yield
\begin{gather*}
\frac{\big(ku-\th u\big)(ku-\dt{\dh u})}{\big(u-k\th u\big)(u-k\dt{\dh u})}=
\frac{(ku-\h{\dt u})(ku-\t{\dh u})}{(u-k\h{\dt u})(u-k\t{\dh u})},\qquad k=p/q.
\end{gather*}
In \cite{KNPT20} certain $a$- and $m$-bond systems lead to double H-type vertex systems, cf.\ \cite[Proposition~6.1]{KNPT20}. As mentioned in \cite[Remark~6.2]{KNPT20}, two of these are two-component extensions of the type discussed here, the third is a mixed 2-component system, H1$\times$H2, and they provide B\"acklund transformations for the above 5-point schemes.

The other $2[1,1]$ABS equations relate to the following 5-point equations:
\begin{align*}
2[1,1]\text{Q2}\colon \quad & \dt{\dh u}\t{\dh u}\h{\dt u}-\dt{\dh u}\t{\dh u}\th u-\dt{\dh u}\th u\h{\dt u}
+\th u\h{\dt u}\t{\dh u}-2\big(u+k^2\big)\big(\h{\dt u}\t{\dh u}-\th u\dt{\dh u}\big)\nonumber\\
&{}+\big(u-k^2\big)^2\big(\h{\dt u}+\t{\dh u}-\th u-\dt{\dh u}\big)=0,\qquad k=p-q,\\ 
2[1,1]\text{Q3}\colon \quad &\dt{\dh u}\t{\dh u}\h{\dt u}-\dt{\dh u}\t{\dh u}\th u-\dt{\dh u}\th u\h{\dt u}
+\th u\h{\dt u}\t{\dh u}-\big(k+k^{-1}\big)u\big(\h{\dt u}\t{\dh u}-\th u\dt{\dh u}\big)\nonumber\\
&{} +\left(u^2+\frac{\delta^2\big(k-k^{-1}\big)^2}{4}\right)\big(\h{\dt u}+\t{\dh u}-\th u-\dt{\dh u}\big)=0,\qquad k=p/q,\\ 
2[1,1]\text{Q4}\colon \quad &\big(1-k^2u^2\big)\big(\dt{\dh u}\t{\dh u}\h{\dt u}-\dt{\dh u}\t{\dh u}\th u
-\dt{\dh u}\th u\h{\dt u}+\th u\h{\dt u}\t{\dh u}\big)-2\big(1-k^2\big)u\big(\h{\dt u}\t{\dh u}-\th u\dt{\dh u}\big)\nonumber\\
&{} +\big(u^2-k^2\big)\big(\h{\dt u}+\t{\dh u}-\th u-\dt{\dh u}\big)=0,\qquad k=\mathrm{sn}(p-q).
\end{align*}

\subsubsection*{Higher order scalar equations from $\boldsymbol{2[1,1,1]}$ AKP and BKP}
Here we show that higher order equations can also be obtained from multi-component extensions of 3D lattice equations.

One can eliminate $u$ from the $2[1,1,1]$ AKP equation \eqref{AKP-2c} as follows.
Denote the left hand side from \eqref{AKP-2c-1} by $E$ and the left hand side from \eqref{AKP-2c-2} by $F$. First we solve
$\h {\dt u}$ from $\dt E=0$, $\h {\db u}$ from $\db E=0$,
$\t {\dh u}$ from $\db{\dt F}=0$ and $\db {\dh u}$ from $\dh{\dt F}=0$.
Then we substitute $\h {\dt u}, \h {\db u}$ into $\db{\dt F}/(\dt v \db v\h v)$
and~$\t {\dh u}$,~$\b {\dh u}$ into $\dh E/(\dh v \t v\b v)$. The difference of the results gives rise to the 12-point equation
\begin{gather}\label{dAKP-12}
a_1^2\left[\frac{\dhb{\t v}}{\dh v\db v\t v}-\frac{\bh{\dt v}}{\dt v\h v\b v}\right]
+a_2^2\left[\frac{\dtb{\h v}}{\dt v\db v\h v}-\frac{\bt{\dh v}}{\dh v\t v\b v}\right]
+a_3^2\left[\frac{\dth{\b v}}{\dt v\dh v\b v}-\frac{\th{\db v}}{\db v\t v\h v}\right] =0.
\end{gather}
Similar to the 2D case, the coupled system \eqref{AKP-2c} can be considered as either a BT or a Lax pair of~\eqref{dAKP-12}.

From the 2-component BKP equation \eqref{BKP-2c}, using a similar elimination scheme, one obtains the 14-point King--Schief equation
\begin{gather}\label{dBKP-14}
a_1^2\left[\frac{\dhb{\t v}}{\dh v\db v\t v}-\frac{\bh{\dt v}}{\dt v\h v\b v}\right]
+a_2^2\left[\frac{\dtb{\h v}}{\dt v\db v\h v}-\frac{\bt{\dh v}}{\dh v\t v\b v}\right]
+a_3^2\left[\frac{\dth{\b v}}{\dt v\dh v\b v}-\frac{\th{\db v}}{\db v\t v\h v}\right]
+a_4^2\left[\frac{\th{\b v}}{\t v\h v\b v}-\frac{\db{\dth v}}{\db v\dt v\dh v}\right]
=0,
\end{gather}
which arose in the study of nondegenerate Cox lattices \cite{King-Schief-2015}. Its BT/Lax pair is provided by~\eqref{BKP-2c}, with $u$ acting as an eigenvalue function, cf.\ \cite[equation~(30)]{King-Schief-2015}, and the equation degenerates to equation \eqref{dAKP-12} when $a_4=0$. The equations~\eqref{dAKP-12} and \eqref{dBKP-14} are satisfied by solutions of the AKP equation and the BKP equation respectively.

The analysis in \cite[Section~5]{King-Schief-2015} reveals that Cox--Menelaus lattices are intimately related to the AKP equation. We expect that \eqref{dAKP-12} will play a similar role in that context as (\ref{dBKP-14}) plays in the context of nondegenerate Cox lattices. We further note that equation (\ref{dBKP-14}) relates, by a simple coordinate transformation \cite{KZQ}, to an equation that appeared in a completely different setting, namely through the notion of duality, employing conservation laws of the lattice AKP equation~\cite{Kamp-JPA-2018}.

\subsection[$N$-component $\mathfrak{q}$-Painlev\'e III equation]{$\boldsymbol{N}$-component $\boldsymbol{\mathfrak{q}}$-Painlev\'e III equation}\label{sec-4-3}

The results in Section~\ref{sec-2} remain true for non-autonomous multi-dimensionally consistent systems,
extending spacing parameters $p \to p(n)$, $q\to q(m)$ and $r\to r(l)$.
There are close relations, cf.~\cite{JNS-JIS-2016, NN}, between non-autonomous ABS lattice equations and
discrete Painlev\'e equations exploing the affine Weyl group.
A particular example is provided by a non-autonomous version of
the lpmKdV equation~(\ref{H30}) which
can be reduced to a $\mathfrak{q}$-Painlev\'e III equation, by
performing a periodic reduction. We use this example to illustrate
that such a link extends to the multi-component case.

For the non-autonomous lpmKdV equation
\[
Q\big(u,\t u,\h u,\th u\big)=p\big(u\t u-\h u\th u\big)-q\big(u\h u-\t u\th u\big)=0,\qquad p=p_0\mathfrak{q}^n,\qquad q=q_0\mathfrak{q}^m,
\]
we use the bottom and front equations on its multi-component consistent cube, i.e.,
\begin{subequations}\label{mkdv-t}
\begin{gather}
p\big( u T_{a}\t u-(T_{b}\h u)\big(T_{a+b}\th u\big)\big)-q\big(u T_{b}\h u-(T_{a}\t u)\big(T_{a+b}\th u\big)\big)=0, \label{mkdv-ta}\\
p\big( u T_{a}\t u -(T_{c}\b u)\big(T_{a+c}\tb u\big)\big)-r\big(uT_{c}{\b u}-(T_{a}\t u)\big(T_{a+c}\tb u\big)\big)=0. \label{mkdv-tb}
\end{gather}
\end{subequations}
Imposing the periodic reduction (cf.~\cite{JNS-JIS-2016})
\begin{gather}\label{red}
\thb u=u,\qquad a+b+c=0
\end{gather}
on \eqref{mkdv-t}, and replacing $p$, $q$ by $p\mathfrak{q}$, $q\mathfrak{q}$, and the first equation \eqref{mkdv-ta} is unchanged but we rewrite it as
\begin{subequations}\label{mkdv-tt}
\begin{gather}\label{mkdv-tta}
T_{a}\t u\big( p u + q T_{a+b}\th u\big)= T_{b}\h u\big(q u+ pT_{a+b}\th u\big)
\end{gather}
for convenience.
For \eqref{mkdv-tb}, after a tilde/hat-shift and making use of the reduction \eqref{red}, we have
\begin{gather}\label{mkdv-ttb}
T_{a}\tth u \big(p\mathfrak{q}\th u+r T_{-b}\t u\big)=T_{-a-b}u \big(r\th u+p\mathfrak{q}T_{-b}\t u\big).
\end{gather}
\end{subequations}
Then, introducing diagonal matrices $f$ and $g$ by
\[f=(T_{a}\t u)\big(T_{a+b}\th u\big)^{-1},\qquad
g=\big(T_{a+b}\th u\big) u^{-1},\qquad
 t=\frac{\mathfrak{q}p}{r},\qquad k=\frac{r}{q\mathfrak{q}},\]
from \eqref{mkdv-tta} and \eqref{mkdv-ttb} we find
\begin{gather*}
T_{-a}\undertilde f =\frac{1+ktg}{fg(kt+g)},\qquad
T_{a}\t g =\frac{1+tf}{fg(t+f)},
\end{gather*}
which is an $N$-component $\mathfrak{q}$-Painlev\'e III equation.

Taking $N=3$ and $a=1$, this yield the 3-component $\mathfrak{q}$-Painlev\'e III system
\begin{alignat*}{3}
& \undertilde f_3=\frac{1+ktg_1}{f_1g_1(kt+g_1)},\qquad && \t g_2=\frac{1+tf_1}{f_1g_1(t+f_1)},& \\
& \undertilde f_1=\frac{1+ktg_2}{f_2g_2(kt+g_2)},\qquad && \t g_3=\frac{1+tf_2}{f_2g_2(t+f_2)},& \\
& \undertilde f_2=\frac{1+ktg_3}{f_3g_3(kt+g_3)}, \qquad && \t g_1=\frac{1+tf_3}{f_3g_3(t+f_3)}.&
\end{alignat*}

\section{Exact solutions}\label{sec-5}

\subsection{Solutions with jumping property}\label{sec-5-1}
Using $N$ solutions to the scalar equation, one can construct a solution for the
$N$-component equation \eqref{QQ}. For $j=1,2,\dots,N$, let $w_j(n,m)$ be a solution to the scalar equation \eqref{Q0}, i.e.,
\begin{gather}\label{Q01}
Q\big(w_j,\t w_j,\h w_j,\th w_j; p,q\big)=0.
\end{gather}
If we set
\begin{gather}\label{sol}
u_k(n,m)=w_{k-an-bm}(n,m),
\end{gather}
where the sub-index is taken modulo $N$, the system of equations \eqref{QQ} comprises $N$ copies of the scalar equation.
Thus, \eqref{sol} provides a solution to \eqref{QQ}. If \eqref{QQ} is non-decoupled, i.e., when $\operatorname{gcd}(a,b,N)=1$,
then $u_k$ will run over $\{w_j\colon j=1,2,\dots N\}$ by virtue of Lemma~\ref{L2} in Appendix~\ref{adec}.
The pattern of $u_k$ is depicted in Fig.~\ref{F:5}.

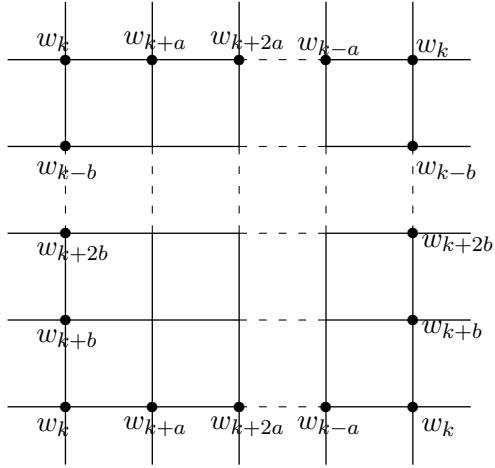
\begin{figure}[ht]\centering
\setlength{\unitlength}{0.10em}
\hspace{20mm} \begin{picture}(200,170)(15,-5)
\drawline(0,20)(80,20)
\drawline(0,50)(80,50)
\drawline(0,80)(80,80)
\drawline(0,110)(80,110)
\drawline(0,140)(80,140)
\dashline{3.000}(110,20)(80,20)
\dashline{3.000}(110,50)(80,50)
\dashline{3.000}(110,80)(80,80)
\dashline{3.000}(110,110)(80,110)
\dashline{3.000}(110,140)(80,140)
\drawline(110,20)(160,20)
\drawline(110,50)(160,50)
\drawline(110,80)(160,80)
\drawline(110,110)(160,110)
\drawline(110,140)(160,140)

\drawline(20,0)(20,80)
\drawline(50,0)(50,80)
\drawline(80,0)(80,80)
\drawline(110,0)(110,80)
\drawline(140,0)(140,80)
\dashline{3.000}(20,110)(20,80)
\dashline{3.000}(50,110)(50,80)
\dashline{3.000}(80,110)(80,80)
\dashline{3.000}(110,110)(110,80)
\dashline{3.000}(140,110)(140,80)
\drawline(20,110)(20,160)
\drawline(50,110)(50,160)
\drawline(80,110)(80,160)
\drawline(110,110)(110,160)
\drawline(140,110)(140,160)

\put(10,10){\makebox(0,0)[lb]{$w_k$}}
\put(20,20){\circle*{4}}
\put(40,10){\makebox(0,0)[lb]{$w_{k+a}$}}
\put(50,20){\circle*{4}}
\put(70,10){\makebox(0,0)[lb]{$w_{k+2a}$}}
\put(80,20){\circle*{4}}
\put(100,10){\makebox(0,0)[lb]{$w_{k-a}$}}
\put(110,20){\circle*{4}}
\put(143,10){\makebox(0,0)[lb]{$w_k$}}
\put(140,20){\circle*{4}}

\put(10,143){\makebox(0,0)[lb]{$w_k$}}
\put(20,140){\circle*{4}}
\put(40,143){\makebox(0,0)[lb]{$w_{k+a}$}}
\put(50,140){\circle*{4}}
\put(70,143){\makebox(0,0)[lb]{$w_{k+2a}$}}
\put(80,140){\circle*{4}}
\put(100,141){\makebox(0,0)[lb]{$w_{k-a}$}}
\put(110,140){\circle*{4}}
\put(141,141){\makebox(0,0)[lb]{$w_k$}}
\put(140,140){\circle*{4}}

\put(10,40){\makebox(0,0)[lb]{$w_{k+b}$}}
\put(20,50){\circle*{4}}
\put(10,70){\makebox(0,0)[lb]{$w_{k+2b}$}}
\put(20,80){\circle*{4}}
\put(10,98){\makebox(0,0)[lb]{$w_{k-b}$}}
\put(20,110){\circle*{4}}

\put(143,43){\makebox(0,0)[lb]{$w_{k+b}$}}
\put(140,50){\circle*{4}}
\put(143,73){\makebox(0,0)[lb]{$w_{k+2b}$}}
\put(140,80){\circle*{4}}
\put(141,98){\makebox(0,0)[lb]{$w_{k-b}$}}
\put(140,110){\circle*{4}}
\end{picture}
\caption{$u_k$ defined by \eqref{sol} on $(n,m)$ lattice.}\label{F:5}
\end{figure}

For the $2[1,1]$ ABS equation \eqref{2c2}, according to \eqref{sol} its solution can be given by
\begin{gather*}
u_1=
\begin{cases}
w_1, & n+m\equiv 0\ (\text{mod} \ 2),\\
w_2, & n+m\equiv 1,
\end{cases} \qquad u_2=
\begin{cases}
w_2, & n+m\equiv 0,\\
w_1, & n+m\equiv 1,
\end{cases}
\end{gather*}
where each $w_i$ satisfies scalar equation \eqref{Q01}. This coincides with the result in~\cite{FZ14}.
For the $3[1,1]$ ABS equation~\eqref{QQ}, solutions can be presented by
\begin{equation}\label{sol3c}
u_k=
\left\{
\begin{array}{ll}
w_k, & n+m\equiv 0\ (\text{mod} \ 3),\\
w_{k-1}, & n+m\equiv 1,\\
w_{k-2}, & n+m\equiv 2,
\end{array}
\right.
\end{equation}
provided each $w_i$ solves the scalar equation \eqref{Q01}. Note that \eqref{sol} has the so-called jumping property (cf.~\cite{FZ14}) and
for~\eqref{sol3c} this property is illustrated by Fig.~\ref{F:6}.

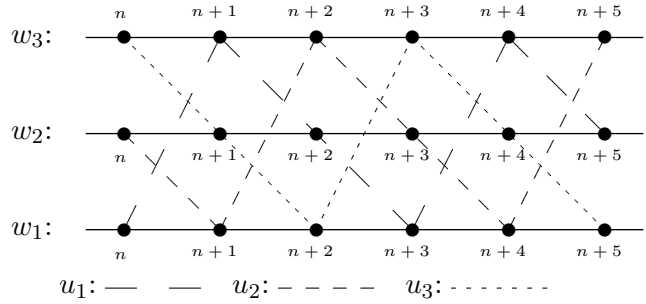
\begin{figure}[h]\centering
\setlength{\unitlength}{0.0005in}
\vspace{5mm}

\hspace{-35mm}\begin{picture}(3482,2813)(300,-10)
\put(100,2608){\makebox(0,0)[lb]{$w_3$:}}
\put(1275,2708){\circle*{150}}
\put(1175,2908){\makebox(0,0)[lb]{\tiny{$n$}}}
\put(2275,2708){\circle*{150}}
\put(1975,2908){\makebox(0,0)[lb]{\tiny{$n+1$}}}
\put(3275,2708){\circle*{150}}
\put(2975,2908){\makebox(0,0)[lb]{\tiny{$n+2$}}}
\put(4275,2708){\circle*{150}}
\put(3975,2908){\makebox(0,0)[lb]{\tiny{$n+3$}}}
\put(5275,2708){\circle*{150}}
\put(4975,2908){\makebox(0,0)[lb]{\tiny{$n+4$}}}
\put(6275,2708){\circle*{150}}
\put(5975,2908){\makebox(0,0)[lb]{\tiny{$n+5$}}}
\drawline(875,2708)(6675,2708)
\drawline(875,1708)(6675,1708)
\drawline(875,708)(6675,708)
\put(100,1608){\makebox(0,0)[lb]{$w_2$:}}
\put(1275,1708){\circle*{150}}
\put(1175,1408){\makebox(0,0)[lb]{\tiny{$n$}}}
\put(2275,1708){\circle*{150}}
\put(1975,1408){\makebox(0,0)[lb]{\tiny{$n+1$}}}
\put(3275,1708){\circle*{150}}
\put(2975,1408){\makebox(0,0)[lb]{\tiny{$n+2$}}}
\put(4275,1708){\circle*{150}}
\put(3975,1408){\makebox(0,0)[lb]{\tiny{$n+3$}}}
\put(5275,1708){\circle*{150}}
\put(4975,1408){\makebox(0,0)[lb]{\tiny{$n+4$}}}
\put(6275,1708){\circle*{150}}
\put(5975,1408){\makebox(0,0)[lb]{\tiny{$n+5$}}}

\put(100,608){\makebox(0,0)[lb]{$w_1$:}}
\put(1275,708){\circle*{150}}
\put(1175,408){\makebox(0,0)[lb]{\tiny{$n$}}}
\put(2275,708){\circle*{150}}
\put(1975,408){\makebox(0,0)[lb]{\tiny{$n+1$}}}
\put(3275,708){\circle*{150}}
\put(2975,408){\makebox(0,0)[lb]{\tiny{$n+2$}}}
\put(4275,708){\circle*{150}}
\put(3975,408){\makebox(0,0)[lb]{\tiny{$n+3$}}}
\put(5275,708){\circle*{150}}
\put(4975,408){\makebox(0,0)[lb]{\tiny{$n+4$}}}
\put(6275,708){\circle*{150}}
\put(5975,408){\makebox(0,0)[lb]{\tiny{$n+5$}}}
\dashline{50.000}(1275,2708)(3275,708)
\dashline{50.000}(4275,2708)(3275,708)
\dashline{50.000}(4275,2708)(6275,708)

\dashline{150.000}(1275,1708)(2275,708)
\dashline{150.000}(3275,2708)(2275,708)
\dashline{150.000}(3275,2708)(5275,708)
\dashline{150.000}(6275,2708)(5275,708)

\dashline{300.000}(1275,708)(2275,2708)
\dashline{300.000}(4275,708)(2275,2708)
\dashline{300.000}(4275,708)(5275,2708)
\dashline{300.000}(6275,1708)(5275,2708)

\put(600,0){\makebox(0,0)[lb]{$u_1$:}}
\dashline{300.000}(1075,100)(2075,100)
\put(2400,0){\makebox(0,0)[lb]{$u_2$:}}
\dashline{150.000}(2875,100)(3875,100)
\put(4200,0){\makebox(0,0)[lb]{$u_3$:}}
\dashline{60.000}(4675,100)(5675,100)
\end{picture}
\caption{Jumping property of $u_i$ in \eqref{sol3c} in the tilde-direction.}\label{F:6}
\end{figure}

Solutions for multi-component extensions of 3D equations, given in Section~\ref{sec-6}, can be given in a similar fashion as for 2D equations. If $\{w_j\}$ are $N$ solutions of the 3D scalar equation \eqref{Q3D},
then
\begin{gather*}
u_k(n,m,l)=w_{k-an-bm-cl}(n,m,l),
\end{gather*}
where the sub-index is taken modulo $N$, provides a solution of the $N[a,b,c]$ extension \eqref{QQ3D}.

\subsubsection*{Bilinear equations}\label{sec-5-2}

Many equations in the ABS list have been bilinearized \cite{HZ09}.
If a scalar ABS equation \eqref{Q0} has a bilinear form\footnote{Some equations need more than two functions to get bilinear forms.
Here we just employ \eqref{H-fg} as a generic form.}
\begin{gather} \label{H-fg}
H\big(\mathfrak{f},\mathfrak{g},\t{\mathfrak{f}}, \h{\mathfrak{f}},\h{\mathfrak{g}},\h{\mathfrak{g}},\th{\mathfrak{f}},\th{\mathfrak{g}}\big)=0,
\end{gather}
with transformation $u=F(\mathfrak{f},\mathfrak{g})$,
for example, H1 equation \eqref{H1-1c} has bilinear form
\begin{gather*}
\h g\t f-\t g \h f +(\alpha-\beta)\big(\t f \h f -f\th f \big)=0,\qquad
g \th f -\th g f +(\alpha+\beta)\big(f \th f -\t f \h f \big)=0,
\end{gather*}
where $p=-\alpha^2$, $q=-\beta^2$,
then for the $N[a,b]$ system \eqref{QQ}, its bilinear form can be given by
\begin{gather}\label{H-fg-Nc}
H\big(f,g,T_a\t f, T_a\t g, T_b\h f, T_b\h g, T_{a+b}\th f, T_{a+b}\th g\big)=0,
\end{gather}
through the transformation $u=F(f,g)$ where $f$, $g$ are diagonal forms in~\eqref{fg}.

With respect to solutions, suppose $(\mathfrak{f}_j,\mathfrak{g}_j)$ are any arbitrary solutions of \eqref{H-fg}.
Using them we define
\begin{gather*}
f_k(n,m)=\mathfrak{f}_{k-an-bm}(n,m),\qquad g_k(n,m)=\mathfrak{g}_{k-an-bm}(n,m).
\end{gather*}
Then, $(f,g)$ composed by such components will be a solution to~\eqref{H-fg-Nc}.

As an example, the $2[0,1]$ H1 equation \eqref{2cH11} has a bilinear form
\begin{gather*}
\h g_2\t f_1-\t g_1\h f_2+(\alpha-\beta)\big(\t f_1\h f_2-f_1\th f_2\big)=0,\\
\h g_1\t f_2-\t g_2\h f_1+(\alpha-\beta)\big(\t f_2\h f_1-f_1\th f_1\big)=0,\\
g_1\th f_2-\th g_2 f_1+(\alpha+\beta)\big(f_1\th f_2-\t f_1\h f_2\big)=0,\\
g_2\th f_1-\th g_1 f_2+(\alpha+\beta)\big(f_2\th f_1-\t f_2\h f_1\big)=0
\end{gather*}
with transformation $u_i=\alpha n+\beta m+r-{g_i}/{f_i}$,
and (with $m$ taken modulo~2)
\begin{gather*}
f_1=
\begin{cases}
\mathfrak{f}_1, & m\equiv 0,\\
\mathfrak{f}_2, & m\equiv 1,
\end{cases}\qquad\!
 f_2=
\begin{cases}
\mathfrak{f}_2, & m\equiv 0,\\
\mathfrak{f}_1, & m\equiv 1,
\end{cases} \qquad\!
 g_1=
\begin{cases}
\mathfrak{g}_1, & m \equiv 0,\\
\mathfrak{g}_2, & m \equiv 1,
\end{cases} \qquad \!
 g_2=
\begin{cases}
\mathfrak{g}_2, & m \equiv 0,\\
\mathfrak{g}_1, & m \equiv 1,
\end{cases}\!
\end{gather*}
where $(\mathfrak{f}_i,\mathfrak{g}_i)$ are solutions of~\eqref{H-fg}.

\subsection{Nonlocal case}\label{sec-5-3}

For some nonlocal ABS equations, if the scalar equation \eqref{Q0} admits an odd or even solution, i.e.,
\begin{gather*}
u(n,m)=\epsilon u(-n,-m),\qquad \epsilon=\pm 1,
\end{gather*}
then, such solutions may be used to construct a solution to the nonlocal equation.

As an example, let us look at the nonlocal H1 equation \eqref{nlH1-2e}. The local H1 equation \eqref{H1-1c}
has rational solution \cite{ZZ-SIGMA-2017}
\begin{gather}\label{H1-trans}
u=\frac{n}{\mu}+\frac{m}{\nu}-\frac{g_{[N]}}{f_{[N]}},
\end{gather}
where $p=-1/\mu^2$, $q=-1/\nu^2$, $f_{[N]}$ and $g_{[N]}$ are Casoratians \cite{ZZ-SIGMA-2017}
\begin{gather*}
 f_{[N]} =\big|\h{N-1}\big| =|\alpha(n,m,0),\alpha(n,m,1),\dots,\alpha(n,m,N-1)|, \\
 g_{[N]}=\big|\h{N-2},N\big| -Nf_{[N]}.
\end{gather*}
The Casoratian vector is
\[\alpha(n,m,l)=(\alpha_0, \alpha_1, \dots, \alpha_{N-1})^{\rm T},\qquad \alpha_j=\frac{1}{(2j+1)!}\partial^{2j+1}_{s_i}\psi_i|_{s_i=0},\]
with
\begin{equation*}
 \psi_i(n,m,l)= \psi_i^{+}(n,m,l) + \psi_i^{-}(n,m,l),\qquad
 \psi_i^{\pm}(n,m,l)= (1\pm s_i)^{l}(1\pm \mu s_i)^n(1\pm \nu s_i)^m.
\end{equation*}
$\psi_i^{\pm}$ has the form
\[\psi_i^{\pm}(n,m,l)=\pm \frac{1}{2}\sum^{\infty}_{h=0}\alpha^{\pm}_h s_i^h
=\pm \frac{1}{2}\exp \left[ -\sum^{\infty}_{j=1}\frac{(\mp s_i)^{j}}{j}\c{x}_j \right],\]
where
\begin{gather*}
 \c{x}_j=x_j+l,\qquad x_j=\mu^jn+\nu^jm,\qquad j\in \mathbb{Z}.
\end{gather*}
Then, $\alpha_j=\alpha^+_{2j+1}$ can be expressed in terms of $\{x_j\}$, see~\cite{ZZ-SIGMA-2017}.
The first few $f_{[N]}$ and $g_{[N]}$ are
\begin{gather*}
 f_{[1]}=x_1, \qquad g_{[1]}=1, \qquad
 f_{[2]}=\frac{x_1^3-x_3}{3},\qquad g_{[2]}=x_1^2, \\
 f_{[3]}=\frac{1}{45}x_1^6-\frac{1}{9}x_1^3x_3+\frac{1}{5}x_1x_5-\frac{1}{9}x_3^2,\qquad
g_{[3]}=\frac{2}{15}x_1^5-\frac{1}{3}x_1^2x_3+\frac{1}{5}x_5.
\end{gather*}
It has been proved in \cite{ZZ-SIGMA-2017} that $f_{[N]}$ and $g_{[N]}$
only depend on $\{x_1, x_3, \dots, x_{2N-1}\}$ and are homogeneous with degrees
 \[\mathcal{D}[f_N]=\frac{N(N+1)}{2},\qquad \mathcal{D}[g_N]=\frac{N(N+1)}{2}-1,\]
defined by the formula $\mathcal{D}\big[\prod\limits_{i\geq 1} x_i^{k_i}\big]=\sum\limits_{i\geq 1} ik_i$.
The function $u$ given by \eqref{H1-trans} is an odd function, and so it provides a solution to the nonlocal H1~\eqref{nlH1-2e}.

We remark that rational solutions in terms of $\{x_j\}$ have been obtained
for all the ABS equations except Q4 \cite{ZZ-SIGMA-2017,ZZ-JNMP-2019}.
This implies rational solutions for nonlocal ABS equations can be derived, which will be explored elsewhere.

\section{Conclusion} \label{sec-7}

We have presented a systematic way to generate multi-component lattice equations which are CAC, by making use of the cyclic group. We note that cyclic matrices have been used in 3-point differential-difference equations \cite{Babalic-JPA-2017,Carstea-JPA-2015} and fully discrete Lax pairs \cite{Fordy-JPA-2017} to generate multi-component systems.

Although the multi-component extensions we have considered are more general than ``the trivial Toeplitz extension'' introduced in Appendix~B of the PhD thesis of J.~Atkinson~\cite{Jat}, the key idea is the same: starting from a single-component CAC (2D or 3D) discrete system \eqref{consist-1}, replacing $u$ by a diagonal matrix \eqref{u-form}, applying permutations $T_k$ on shifted field components, one obtains a multi-component CAC system \eqref{consist-2'}. Posing the equations on lattices, D4 symmetry, criteria for decoupled and non-decoupled cases, BTs, auto-BTs, Lax pairs, solutions, nonlocal reductions, elimination of components to get equations on larger stencils, and reduction to multi-component discrete Painlev\'e equations, have been investigated in detail.

Isolated examples of multi-component extensions sporadically appear in the literature in different contexts. We mention: the two-component potential KdV system \cite[Table~5]{BHQK12}, cf.~\cite{FZ14}; a~two-component extension \cite[equation~(3.17)]{JLN} of the (non-potential) lattice mKdV equation \cite[equation~(2.49)]{NAH}; and the linear system of tetrahedral equations \cite[equation~(30)]{King-Schief-2015} that constitutes a two-component version of the BKP equation. We have provided a basic understanding for all these examples.

What we have not touched upon, is the fact that multi-component extension can also be applied to systems of equations. For example, the discrete Boussinesq family contains several multi-component CAC lattice systems \cite{Hietarinta-JPA-DBSQ}. These also allow $N[a,b]$ extension, cf.\ the $2[1,1]$ Boussinesq extension given in~\cite{FZ14}.

Finally, we'd like to point out that the multi-component extensions considered here are commutative. Non-commutative lattice equations, which are multi-component generalisations, have also been considered in the literature: 3D matrix discrete interable systems can be found in \cite[equations~(1.5)--(1.9)]{NC}, a~non-commutative version of~$\text{Q1}^0$ was given in \cite[equation~(23)]{BS2}, a~matrix version of H1 was obtained in \cite[equation~(2.1)]{FNC}, cf.~\cite{DJZ}, where several matrix discrete integrable equations derived from the Cauchy matrix approach were presented.

\appendix

\section{Multi-component Lax pair from BT}\label{appendixA}
For the scalar consistent lattice system \eqref{consist-1} with $Q^*=Q$, there exist functions $G_i$ such that
\begin{gather}\label{u-G}
 \th u=G_1(u,\t u,\h u),\qquad
 \tb u=G_2(u,\t u,\b u),\qquad
\hb u=G_3(u,\b u,\h u),
\end{gather}
which leads to the Lax pair \eqref{Lax-LM1} for \eqref{Q0}.
It holds as well if $u$ is extended to the diagonal form \eqref{u-form}.
Introduce $\b u=gf^{-1}$ where $f$ and $g$ are given in~\eqref{fg}.
From \eqref{u-G} there exist functions $F_2$ and $F_3$ such that
\[ \t g\t f^{-1}=F_2(u,\t u, f, g),\qquad \h g\h f^{-1}=F_3(u,\h u, f, g),\]
which leads to a Lax pair for \eqref{Q} where $u$ is~\eqref{u-form}:
\begin{gather}\label{Lax-LM3}
\Phi\big(\t f,\t g\big)=L(u, \t u)\Phi(f,g),\qquad \Phi\big(\h f,\h g\big)=M(u, \h u)\Phi(f,g),
\end{gather}
where
\begin{gather}\label{Phi}
\Phi(f,g)=(f_1,g_1,f_2,g_2,\dots,f_N,g_N)^{\rm T},
\end{gather}
and $L$ and $M$ are defined by \eqref{LM} through $\mathcal{L}$ and $\mathcal{M}$.

Meanwhile, from \eqref{u-G} we have
\begin{gather*}
 T_{a+b}\th u =G_1(u,T_{a}\t u,T_{b}\h u),\qquad
 T_{a+c}\tb u =G_2(u,T_{a}\t u,T_c \b u),\qquad
 T_{b+c}\hb u =G_3(u,T_{c}\b u,T_{b}\h u),
\end{gather*}
and consequently
\begin{gather*}
T_{a+c}\big(\t g\t f^{-1}\big)=F_2(u,T_{a}\t u, T_{c}f, T_{c}g),\qquad
T_{b+c}\big(\h g\h f^{-1}\big)=F_3(u,T_{b}\h u, T_{c}f, T_{c}g).
\end{gather*}
From this and the definition \eqref{Phi} for $\Phi$, we obtain (with $\theta$ defined in Theorem~\ref{T4})
\begin{gather*}
\Phi( T_{c}f, T_{c}g)=\theta^c \Phi(f, g),\qquad
\Phi\big( T_{a+c}\t f, T_{a+c}\t g \big)=\theta^{a+c} \Phi\big(\t f,\t g\big),\\
\Phi\big( T_{b+c}\h f, T_{b+c}\h g \big)=\theta^{b+c} \Phi\big(\h f,\h g\big).
\end{gather*}
Compared with \eqref{Lax-LM3}, it comes to
\[\theta^{a+c} \Phi\big(\t f,\t g\big)=L(u, T_{a}\t u)\theta^c \Phi(f, g),\qquad \theta^{b+c} \Phi\big(\h f,\h g\big)=M(u, T_{b}\h u)\theta^c \Phi(f, g),\]
which gives rise to the Lax pair \eqref{Lax-LM2} for~\eqref{QQ}.

\section{Proof of Theorem \ref{T2}} \label{adec}
We first prove a useful lemma. Although we believe it is an elementary result in number theory, we include it
for completeness. We denote $\mathbb{N}=\{1,2,\ldots\}$ and $\mathbb{N}^0=\{0,1,2,\ldots\}$.
\begin{Lemma}\label{L2}
Let $a,b\in\mathbb{N}^0$, $N \in \mathbb{N}$ such that $\operatorname{gcd}(a,b,N)=1$, and let $\mathbb{A}=\{ i a+ j b + k N\colon i,j,k\in\mathbb{Z}\}$.
There are $i_0,j_0\in \mathbb{N}$ and $k_0\in\mathbb{Z}$
such that
\begin{gather} \label{ijk-1}
1=i_0 a+j_0 b +k_0 N,
\end{gather}
and hence $\mathbb{A}=\mathbb{Z}$.
\end{Lemma}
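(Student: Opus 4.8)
The plan is to read the identity \eqref{ijk-1} as a Bézout relation for the three integers $a$, $b$, $N$, refined so that the coefficients of $a$ and $b$ land in $\mathbb{N}$. First I would invoke the standard fact that the set of integer combinations $\{ia+jb+kN\colon i,j,k\in\mathbb{Z}\}$ is a subgroup of $\mathbb{Z}$, hence of the form $d\mathbb{Z}$ with $d=\operatorname{gcd}(a,b,N)$; since $d=1$ by hypothesis, there exist (a priori merely integer) coefficients $i,j,k\in\mathbb{Z}$ with $ia+jb+kN=1$. Concretely one may produce these in two steps: pick $i',j'$ with $i'a+j'b=\operatorname{gcd}(a,b)=:g$, then pick $\alpha,k$ with $\alpha g+kN=1$ (possible because $\operatorname{gcd}(g,N)=\operatorname{gcd}(a,b,N)=1$), and set $i=\alpha i'$, $j=\alpha j'$.

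The only genuine content beyond Bézout is the positivity requirement $i_0,j_0\in\mathbb{N}$, and the idea is to translate the solution along the two homogeneous relations $N\cdot a+(-a)\cdot N=0$ and $N\cdot b+(-b)\cdot N=0$, neither of which changes the value of the combination. Starting from $ia+jb+kN=1$, replacing $(i,k)$ by $(i+N,k-a)$ leaves the sum equal to $1$ and increases the first coefficient by $N\geq 1$; iterating $m_1$ times makes it positive. Likewise replacing $(j,k)$ by $(j+N,k-b)$ leaves the sum unchanged and raises the second coefficient, and after $m_2$ iterations it too is positive. Since the second family of moves does not touch the first coefficient, I would set $i_0=i+m_1N$, $j_0=j+m_2N$, $k_0=k-m_1a-m_2b$; a direct substitution then gives $i_0a+j_0b+k_0N=1$ with $i_0,j_0\in\mathbb{N}$ and $k_0\in\mathbb{Z}$, which is \eqref{ijk-1}. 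These moves remain valid when $a$ or $b$ vanishes, the corresponding coefficient then being free to be chosen positive.

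Finally, for the consequence $\mathbb{A}=\mathbb{Z}$ I would note that $\mathbb{A}\subseteq\mathbb{Z}$ trivially, while for any $n\in\mathbb{Z}$ one has $n=n\cdot 1=(ni_0)a+(nj_0)b+(nk_0)N\in\mathbb{A}$, so $\mathbb{Z}\subseteq\mathbb{A}$. I do not expect any real obstacle here: the substance is the elementary Bézout identity, and the only step requiring care beyond the textbook statement is the positivity of $i_0,j_0$, which is a bookkeeping adjustment carried out by the value-preserving translations above, exactly as the paper anticipates in calling the result elementary.
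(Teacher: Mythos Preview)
Your argument is correct and in substance coincides with the paper's: both obtain an integer solution of $ia+jb+kN=1$ and then push $i$ and $j$ into $\mathbb{N}$ by the translations $(i,k)\mapsto(i+N,k-a)$ and $(j,k)\mapsto(j+N,k-b)$, exactly as you do. The only difference is in how the initial integer solution is produced: you invoke B\'ezout (equivalently, that $\mathbb{A}$ is a subgroup of $\mathbb{Z}$, hence $d\mathbb{Z}$ with $d=\operatorname{gcd}(a,b,N)=1$), whereas the paper reproves this fact from scratch by taking $s_0$ to be the least positive element of $\mathbb{A}$ and using the division algorithm to show $s_0\mid a,b,N$, hence $s_0=1$. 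Your route is a little quicker because it cites a standard result; the paper's is self-contained. Either way the positivity adjustment and the conclusion $\mathbb{A}=\mathbb{Z}$ are identical.
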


\begin{proof}
Let $s_0=i_0 a+j_0 b+k_0N$ be the smallest positive integer in $\mathbb{A}$. For all $s>0 \in \mathbb{A}$, there are $i, j, k \in \mathbb{Z}$
such that $s=i a+j b+k N$, and there exist $q, r\in \mathbb{Z}$ such that $s=s_0 q+r$ where $q>0$ and $0\leq r < s_0$.
Then we have
\begin{gather*}
0\leq r = s-s_0 q = (i-i_0 q) a +(j-j_0 q) b+(k-k_0 q) N \in \mathbb{A}.
\end{gather*}
Since $s_0$ is the smallest positive number in $\mathbb{A}$ and $0\leq r < s_0$,
we must have $r=0$, which leads to $s=s_0q$. As $s$ was arbitrary it follows that $s_0$ is a divisor of $\operatorname{gcd}(a,b,N)$,
which implies $s_0=1$ because $\operatorname{gcd}(a,b,N)=1$. Thus we reach~\eqref{ijk-1} and consequently $\mathbb{A}$ covers $\mathbb{Z}$.
If $i_0$ and $j_0$ are not positive, there exist $i, j\in \mathbb{N}$ such that
$i N + i_0>0$ and $j N+j_0>0$, and from \eqref{ijk-1} we have
\[1=(i N+i_0)a +(j N+j_0)b + (k_0-i a -j b)N.\tag*{\qed}
\]\renewcommand{\qed}{}
\end{proof}

We now prove Theorem~\ref{T2}.

\begin{proof} We consider two cases, $d=\operatorname{gcd}(a,b,N)>1$ and $d=1$.

$d>1$: The $N[a,b]$ system \eqref{QQ}, viewed as a set which we denote by $S$ here, contains $N$ equations of the form
\begin{gather*}
Q[k]\colon \ Q\big(u_{k},\t u_{k+a},\h u_{k+b},\th u_{k+a+b}\big)=0,\qquad k=1,2,\dots, N,
\end{gather*}
where the sub-index $i$ on the components $u_i$ and hence the `argument' of $Q[\cdot]$) is taken modulo~$N$. These equations depend on $N$ field components,
\[
V=\{u_i\colon 1\leq i\leq N\}.
\]

For each $1\leq i\leq d$ we define a subset of $M=N/d$ equations
\[
S_i=\{Q[i],Q[d+i],\ldots,Q[(M-1)d+i]\},
\]
so that the disjoint union $\cup_{i=1}^d S_i$ equals the set $S$. Writing the set of variables as a disjoint union, $V=\cup_{i=1}^d V_i$ where
$u_j\in V_i$ iff $j\equiv i$ mod $N$, we have, for all $i$, that each equation in $S_i$ only depends on the variables in $V_i$. Renaming
the variables $u_{i+dj}\in V_i$ by $v_j$ shows that the system $S_i$ is a $M[a/d,b/d]$ system.

 $d=1$: We distinguish three cases: $a=0$, $a=b$, $a<b$.

 $a=0$: The generic equation in the system $N[0,b]$, with $b\neq0$, is
\[
Q[k]\colon \ Q\big(u_{k},\t u_{k},\h u_{k+b},\th u_{k+b}\big)=0.
\]
Suppose $Y$ is a subset of equations which depend on a subset of variables $U\subset V$. If all equations that depend on variables in $U$ are in $Y$ and $Y$ is a proper subset, then the system is decoupled. Without loss of generality, suppose $Q[1]\in Y$. As $Q[1]$ depends on $u_{1+b}$, we have $Q[1+b]\in Y$, which in turn implies that $Q[1+2b]\in Y$. Continuing this argument
\begin{equation} \label{Ts}
Y\supset \{Q[1+ib]\colon 0\leq i < N\}
\end{equation}
contains $N$ equations. As $1+ib\equiv 1+jb$ mod $N$ implies $i\equiv j$ mod $N$ when $\operatorname{gcd}(b,N)=1$, they are all distinct. This shows that $Y$ is not a proper subset and hence the system is non-decoupled.

$a=b$: The generic equation in the system $N[b,b]$, with $b\neq0$, is
\[
Q[k]\colon \ Q\big(u_{k},\t u_{k+b},\h u_{k+b},\th u_{k+2b}\big)=0.
\]
As in the case $a=0$, attempting to construct a proper subset of equations, $Y$, we find~\eqref{Ts}.

$a<b$: For the system $N[a,b]$, with $0<a<b$, the generic equation has the form
\[
Q[k]\colon \ Q\big(u_{k},\t u_{k+a},\h u_{k+b},\th u_{k+a+b}\big)=0.
\]
Starting from $Q[1]\in Y$, following the dependence on the variables we must have
\[
\big\{Q[1+ai+bj]\colon i,j\in \mathbb{N}^0\big\} \subset Y.
\]
It follows from Lemma~\ref{L2} that $Y$ contains $N$ distinct equations and therefore this case is non-decoupled as well.
\end{proof}

\subsection*{Acknowledgments}
The authors thank Jarmo Hietarinta for his suggestion to include equations (\ref{JH1}), (\ref{JH2}) and noting they are not of the same form. We thank Pavlos Kassotakis and Maciej Nieszporski for noting equation (\ref{201H1}) can be written as a quadrilateral system. We thank all referees for their comments, especially the referee who pointed out Appendix~B from~\cite{Jat}. DJZ is grateful to Professors Q.P.~Liu and R.G.~Zhou for warm discussion. This project is supported by the NSF of China (grant nos.~11875040, 11631007 and 11801289), the K.C.~Wong Magna Fund in Ningbo University, and a CRSC grant from La Trobe University.


\pdfbookmark[1]{References}{ref}
\LastPageEnding


\begin{thebibliography}{99}
\footnotesize\itemsep=0pt

\bibitem{AM}
Ablowitz M.J., Musslimani Z.H., Integrable nonlocal nonlinear {S}chr\"{o}dinger
 equation, \href{https://doi.org/10.1103/PhysRevLett.110.064105}{\textit{Phys. Rev. Lett.}} \textbf{110} (2013), 064105, 5~pages.

\bibitem{Adl1}
Adler V.E., On the structure of the {B}\"acklund transformations for the
 relativistic lattices, \href{https://doi.org/10.2991/jnmp.2000.7.1.4}{\textit{J.~Nonlinear Math. Phys.}} \textbf{7} (2000),
 34--56, \href{https://arxiv.org/abs/nlin.SI/0001072}{arXiv:nlin.SI/0001072}.

\bibitem{Adl2}
Adler V.E., Discrete equations on planar graphs, \href{https://doi.org/10.1088/0305-4470/34/48/310}{\textit{J.~Phys.~A: Math.
 Gen.}} \textbf{34} (2001), 10453--10460.

\bibitem{ABS03}
Adler V.E., Bobenko A.I., Suris Yu.B., Classification of integrable equations on
 quad-graphs. {T}he consistency approach, \href{https://doi.org/10.1007/s00220-002-0762-8}{\textit{Comm. Math. Phys.}}
 \textbf{233} (2003), 513--543, \href{https://arxiv.org/abs/nlin.SI/0202024}{arXiv:nlin.SI/0202024}.

\bibitem{ABS09}
Adler V.E., Bobenko A.I., Suris Yu.B., Discrete nonlinear hyperbolic equations:
 classification of integrable cases, \href{https://doi.org/10.1007/s10688-009-0002-5}{\textit{Funct. Anal. Appl.}} \textbf{43}
 (2009), 3--17, \href{https://arxiv.org/abs/0705.1663}{arXiv:0705.1663}.

\bibitem{ABS-IMRN-2011}
Adler V.E., Bobenko A.I., Suris Yu.B., Classification of integrable discrete
 equations of octahedron type, \href{https://doi.org/10.1093/imrn/rnr083}{\textit{Int. Math. Res. Not.}} \textbf{2012}
 (2012), 1822--1889, \href{https://arxiv.org/abs/1011.3527}{arXiv:1011.3527}.

\bibitem{Atk08}
Atkinson J., B\"acklund transformations for integrable lattice equations,
 \href{https://doi.org/10.1088/1751-8113/41/13/135202}{\textit{J.~Phys.~A: Math. Theor.}} \textbf{41} (2008), 135202, 8~pages,
 \href{https://arxiv.org/abs/0801.1998}{arXiv:0801.1998}.

\bibitem{Jat}
Atkinson J., Integrable lattice equations: connection to the {M}\"obius group,
 {B}\"acklund transformations and solutions, Ph.D.~Thesis, {U}niversity of
 Leeds, 2008, available at \url{http://etheses.whiterose.ac.uk/9081}.

\bibitem{AN}
Atkinson J., Nieszporski M., Multi-quadratic quad equations: integrable cases
 from a factorized-discriminant hypothesis, \href{https://doi.org/10.1093/imrn/rnt066}{\textit{Int. Math. Res. Not.}}
 \textbf{2014} (2014), 4215--4240, \href{https://arxiv.org/abs/1204.0638}{arXiv:1204.0638}.

\bibitem{Babalic-JPA-2017}
Babalic C.N., Carstea A.S., Coupled {A}blowitz--{L}adik equations with branched
 dispersion, \href{https://doi.org/10.1088/1751-8121/aa87a4}{\textit{J.~Phys.~A: Math. Theor.}} \textbf{50} (2017), 415201,
 13~pages, \href{https://arxiv.org/abs/1705.10975}{arXiv:1705.10975}.

\bibitem{BS2}
Bobenko A.I., Suris Yu.B., Integrable noncommutative equations on quad-graphs.
 {T}he consistency approach, \href{https://doi.org/10.1023/A:1021249131979}{\textit{Lett. Math. Phys.}} \textbf{61} (2002), 241--254, \href{https://arxiv.org/abs/nlin.SI/0206010}{arXiv:nlin.SI/0206010}.

\bibitem{BS}
Bobenko A.I., Suris Yu.B., Integrable systems on quad-graphs, \href{https://doi.org/10.1155/S1073792802110075}{\textit{Int. Math.
 Res. Not.}} \textbf{2002} (2002), 573--611, \href{https://arxiv.org/abs/nlin.SI/0110004}{arXiv:nlin.SI/0110004}.

\bibitem{RB11}
Boll R., Classification of 3{D} consistent quad-equations, \href{https://doi.org/10.1142/S1402925111001647}{\textit{J.~Nonlinear
 Math. Phys.}} \textbf{18} (2011), 337--365, \href{https://arxiv.org/abs/1009.4007}{arXiv:1009.4007}.

\bibitem{BHQK12}
Bridgman T., Hereman W., Quispel G.R.W., van~der Kamp P.H., Symbolic
 computation of {L}ax pairs of partial difference equations using consistency
 around the cube, \href{https://doi.org/10.1007/s10208-012-9133-9}{\textit{Found. Comput. Math.}} \textbf{13} (2013), 517--544,
 \href{https://arxiv.org/abs/1308.5473}{arXiv:1308.5473}.

\bibitem{Carstea-JPA-2015}
Carstea A.S., Tokihiro T., Coupled discrete {K}d{V} equations and modular
 genetic networks, \href{https://doi.org/10.1088/1751-8113/48/5/055205}{\textit{J.~Phys.~A: Math. Theor.}} \textbf{48} (2015),
 055205, 12~pages.

\bibitem{CZZ-arxiv-2019}
Chen K., Zhang C., Zhang D.-J., Squared eigenfunction symmetry of the D$\Delta$mKP
 hierarchy and its constraint, \href{https://arxiv.org/abs/1904.08108}{arXiv:1904.08108}.

\bibitem{DGNS}
Doliwa A., Grinevich P., Nieszporski M., Santini P.M., Integrable lattices and
 their sublattices: from the discrete {M}outard (discrete {C}auchy--{R}iemann)
 4-point equation to the self-adjoint 5-point scheme, \href{https://doi.org/10.1063/1.2406056}{\textit{J.~Math. Phys.}}
 \textbf{48} (2007), 013513, 28~pages, \href{https://arxiv.org/abs/nlin.SI/0410046}{arXiv:nlin.SI/0410046}.

\bibitem{Duf}
Duffin R.J., Basic properties of discrete analytic functions, \href{https://doi.org/10.1215/S0012-7094-56-02332-8}{\textit{Duke
 Math.~J.}} \textbf{23} (1956), 335--363.

\bibitem{FNC}
Field C.M., Nijhoff F.W., Capel H.W., Exact solutions of quantum mappings from
 the lattice {K}d{V} as multi-dimensional operator difference equations,
 \href{https://doi.org/10.1088/0305-4470/38/43/007}{\textit{J.~Phys.~A: Math. Gen.}} \textbf{38} (2005), 9503--9527.

\bibitem{Fordy-JPA-2017}
Fordy A.P., Xenitidis P., {${\mathbb Z}_N$} graded discrete {L}ax pairs and
 integrable difference equations, \href{https://doi.org/10.1088/1751-8121/aa639a}{\textit{J.~Phys.~A: Math. Theor.}}
 \textbf{50} (2017), 165205, 30~pages, \href{https://arxiv.org/abs/1411.6059}{arXiv:1411.6059}.

\bibitem{FN}
Fu W., Nijhoff F.W., Direct linearizing transform for three-dimensional
 discrete integrable systems: the lattice {AKP}, {BKP} and {CKP} equations,
 \href{https://doi.org/10.1098/rspa.2016.0915}{\textit{Proc.~R. Soc. Lond. Ser.~A Math. Phys. Eng. Sci.}} \textbf{473} (2017), 20160195, 22~pages,
 \href{https://arxiv.org/abs/1612.04711}{arXiv:1612.04711},.

\bibitem{FZ14}
Fu W., Zhang D.-J., Zhou R.-G., A class of two-component
 {A}dler--{B}obenko--{S}uris lattice equations, \href{https://doi.org/10.1088/0256-307X/31/9/090202}{\textit{Chinese Phys. Lett.}}
 \textbf{31} (2004), 090202, 5~pages.

\bibitem{GSY}
Gubbiotti G., Scimiterna C., Yamilov R.I., Darboux integrability of trapezoidal
 {$H^4$} and {$H^6$} families of lattice equations~{II}: {G}eneral solutions,
 \href{https://doi.org/10.3842/SIGMA.2018.008}{\textit{SIGMA}} \textbf{14} (2018), 008, 51~pages, \href{https://arxiv.org/abs/1704.05805}{arXiv:1704.05805}.

\bibitem{Hietarinta-JPA-DBSQ}
Hietarinta J., Boussinesq-like multi-component lattice equations and
 multi-dimensional consistency, \href{https://doi.org/10.1088/1751-8113/44/16/165204}{\textit{J.~Phys.~A: Math. Theor.}} \textbf{44}
 (2011), 165204, 22~pages, \href{https://arxiv.org/abs/1011.1978}{arXiv:1011.1978}.

\bibitem{Hietarinta-JNMP-2019}
Hietarinta J., Search for {CAC}-integrable homogeneous quadratic triplets of
 quad equations and their classification by {BT} and {L}ax,
 \href{https://doi.org/10.1080/14029251.2019.1613047}{\textit{J.~Nonlinear Math. Phys.}} \textbf{26} (2019), 358--389,
 \href{https://arxiv.org/abs/1806.08511}{arXiv:1806.08511}.

\bibitem{HZ-2009-preprint}
Hietarinta J., Zhang D.-J., On the relations of the dsG and lpmKdV,
 {u}npublished.

\bibitem{HZ09}
Hietarinta J., Zhang D.-J., Soliton solutions for {ABS} lattice equations.~{II}.
 {C}asoratians and bilinearization, \href{https://doi.org/10.1088/1751-8113/42/40/404006}{\textit{J.~Phys.~A: Math. Theor.}}
 \textbf{42} (2009), 404006, 30~pages, \href{https://arxiv.org/abs/0903.1717}{arXiv:0903.1717}.

\bibitem{HZ10}
Hietarinta J., Zhang D.-J., Multisoliton solutions to the lattice {B}oussinesq
 equation, \href{https://doi.org/10.1063/1.3280362}{\textit{J.~Math. Phys.}} \textbf{51} (2010), 033505, 12~pages,
 \href{https://arxiv.org/abs/0906.3955}{arXiv:0906.3955}.

\bibitem{HZ11}
Hietarinta J., Zhang D.-J., Soliton taxonomy for a modification of the lattice
 {B}oussinesq equation, \href{https://doi.org/10.3842/SIGMA.2011.061}{\textit{SIGMA}} \textbf{7} (2011), 061, 14~pages,
 \href{https://arxiv.org/abs/1105.4413}{arXiv:1105.4413}.

\bibitem{Hir}
Hirota R., Discrete analogue of a generalized {T}oda equation, \href{https://doi.org/10.1143/JPSJ.50.3785}{\textit{J.~Phys.
 Soc. Japan}} \textbf{50} (1981), 3785--3791.

\bibitem{JLN}
Joshi N., Lobb S., Nolan M., Constructing initial value spaces of lattice
 equations, \href{https://arxiv.org/abs/1807.06162}{arXiv:1807.06162}.

\bibitem{JNS-JIS-2016}
Joshi N., Nakazono N., Shi Y., Reflection groups and discrete integrable
 systems, \href{https://doi.org/10.1093/integr/xyw006}{\textit{J.~Integrable Systems}} \textbf{1} (2016), xyw006, 37~pages,
 \href{https://arxiv.org/abs/1605.01171}{arXiv:1605.01171}.

\bibitem{KN}
Kassotakis P., Nieszporski M., Difference systems in bond and face variables
 and non-potential versions of discrete integrable systems,
 \href{https://doi.org/10.1088/1751-8121/aad4c4}{\textit{J.~Phys.~A: Math. Theor.}} \textbf{51} (2018), 385203, 21~pages,
 \href{https://arxiv.org/abs/1710.11111}{arXiv:1710.11111}.

\bibitem{KNPT19}
Kassotakis P., Nieszporski M., Papageorgiou V., Tongas A., Tetrahedron maps and
 symmetries of three dimensional integrable discrete equations,
 \href{https://doi.org/10.1063/1.5124874}{\textit{J.~Math. Phys.}} \textbf{60} (2019), 123503, 18~pages,
 \href{https://arxiv.org/abs/1908.03019}{arXiv:1908.03019}.

\bibitem{KNPT20}
Kassotakis P., Nieszporski M., Papageorgiou V., Tongas A., Integrable
 two-component systems of difference equations, \href{https://doi.org/10.1098/rspa.2019.0668}{\textit{Proc.~R. Soc. Lond.
 Ser.~A Math. Phys. Eng. Sci.}} \textbf{476} (2020), 20190668, 22~pages,
 \href{https://arxiv.org/abs/1908.02413}{arXiv:1908.02413}.

\bibitem{Kels}
Kels A.P., Extended {Z}-invariance for integrable vector and face models and
 multi-component integrable quad equations, \href{https://doi.org/10.1007/s10955-019-02346-9}{\textit{J.~Stat. Phys.}}
 \textbf{176} (2019), 1375--1408, \href{https://arxiv.org/abs/1812.10893}{arXiv:1812.10893}.

\bibitem{King-Schief-2015}
King A.D., Schief W.K., Bianchi hypercubes and a geometric unification of the
 {H}irota and {M}iwa equations, \href{https://doi.org/10.1093/imrn/rnu143}{\textit{Int. Math. Res. Not.}} \textbf{2015}
 (2015), 6842--6878.

\bibitem{LMW}
Levi D., Martina L., Winternitz P., Structure preserving discretizations of the
 {L}iouville equation and their numerical tests, \href{https://doi.org/10.3842/SIGMA.2015.080}{\textit{SIGMA}} \textbf{11}
 (2015), 080, 20~pages, \href{https://arxiv.org/abs/1504.01953}{arXiv:1504.01953}.

\bibitem{Levi-Bur-1983}
Levi D., Ragnisco O., Bruschi M., Continuous and discrete matrix {B}urgers'
 hierarchies, \href{https://doi.org/10.1007/BF02721683}{\textit{Nuovo Cimento~B}} \textbf{74} (1983), 33--51.

\bibitem{Mer}
Mercat C., Discrete {R}iemann surfaces and the {I}sing model, \href{https://doi.org/10.1007/s002200000348}{\textit{Comm.
 Math. Phys.}} \textbf{218} (2001), 177--216, \href{https://arxiv.org/abs/0909.3600}{arXiv:0909.3600}.

\bibitem{Miwa-1982}
Miwa T., On {H}irota's difference equations, \href{https://doi.org/10.3792/pjaa.58.9}{\textit{Proc. Japan Acad. Ser.~A
 Math. Sci.}} \textbf{58} (1982), 9--12.

\bibitem{NN}
Nakazono N., Reduction of lattice equations to the {P}ainlev\'e equations:
 {$\rm P_{IV}$} and {$\rm P_V$}, \href{https://doi.org/10.1063/1.5023252}{\textit{J.~Math. Phys.}} \textbf{59} (2018),
 022702, 18~pages, \href{https://arxiv.org/abs/1703.09215}{arXiv:1703.09215}.

\bibitem{Nij02}
Nijhoff F.W., Lax pair for the {A}dler (lattice {K}richever--{N}ovikov) system,
 \href{https://doi.org/10.1016/S0375-9601(02)00287-6}{\textit{Phys. Lett.~A}} \textbf{297} (2002), 49--58, \href{https://arxiv.org/abs/nlin.SI/0110027}{arXiv:nlin.SI/0110027}.

\bibitem{NAH}
Nijhoff F.W., Atkinson J., Hietarinta J., Soliton solutions for {ABS} lattice
 equations. {I}. {C}auchy matrix approach, \href{https://doi.org/10.1088/1751-8113/42/40/404005}{\textit{J.~Phys.~A: Math. Theor.}}
 \textbf{42} (2009), 404005, 34~pages, \href{https://arxiv.org/abs/0902.4873}{arXiv:0902.4873}.

\bibitem{NC}
Nijhoff F.W., Capel H.W., The direct linearisation approach to hierarchies of
 integrable {PDE}s in {$2+1$} dimensions. {I}. {L}attice equations and the
 differential-difference hierarchies, \href{https://doi.org/10.1088/0266-5611/6/4/008}{\textit{Inverse Problems}} \textbf{6} (1990), 567--590.

\bibitem{NPC}
Nijhoff F.W., Papageorgiou V.G., Capel H.W., Integrable time-discrete systems:
 lattices and mappings, in Quantum Groups ({L}eningrad, 1990), \textit{Lecture
 Notes in Math.}, Vol.~1510, \href{https://doi.org/10.1007/BFb0101197}{Springer}, Berlin, 1992, 312--325.

\bibitem{NPCQ}
Nijhoff F.W., Papageorgiou V.G., Capel H.W., Quispel G.R.W., The lattice
 {G}el'fand--{D}ikii hierarchy, \href{ttps://doi.org/10.1088/0266-5611/8/4/010}{\textit{Inverse Problems}} \textbf{8} (1992),
 597--621.

\bibitem{NW01}
Nijhoff F.W., Walker A.J., The discrete and continuous {P}ainlev\'e {VI}
 hierarchy and the {G}arnier systems, \href{https://doi.org/10.1017/S0017089501000106}{\textit{Glasg. Math.~J.}} \textbf{43}
 (2001), 109--123, \href{https://arxiv.org/abs/nlin.SI/0001054}{arXiv:nlin.SI/0001054}.

\bibitem{Nimmo-JPA-1997}
Nimmo J.J.C., Darboux transformations and the discrete {KP} equation,
 \href{https://doi.org/10.1088/0305-4470/30/24/028}{\textit{J.~Phys.~A: Math. Gen.}} \textbf{30} (1997), 8693--8704.

\bibitem{PT}
Papageorgiou V., Tongas A., Yang--{B}axter maps associated to elliptic curves,
 \href{https://arxiv.org/abs/0906.3258}{arXiv:0906.3258}.

\bibitem{Sur}
Suris Yu.B., A discrete-time relativistic {T}oda lattice, \href{https://doi.org/10.1088/0305-4470/29/2/022}{\textit{J.~Phys.~A:
 Math. Gen.}} \textbf{29} (1996), 451--465, \href{https://arxiv.org/abs/solv-int/9510007}{arXiv:solv-int/9510007}.

\bibitem{TN-2005}
Tongas A., Nijhoff F., The {B}oussinesq integrable system: compatible lattice
 and continuum structures, \href{https://doi.org/10.1017/S0017089505002417}{\textit{Glasg. Math.~J.}} \textbf{47} (2005),
 205--219, \href{https://arxiv.org/abs/nlin.SI/0402053}{arXiv:nlin.SI/0402053}.

\bibitem{LMP-3D-2008}
Tsarev S.P., Wolf T., Classification of three-dimensional integrable scalar
 discrete equations, \href{https://doi.org/10.1007/s11005-008-0230-2}{\textit{Lett. Math. Phys.}} \textbf{84} (2008), 31--39,
 \href{https://arxiv.org/abs/0706.2464}{arXiv:0706.2464}.

\bibitem{Kamp-JPA-2018}
van~der Kamp P.H., Quispel G.R.W., Zhang D.-J., Duality for discrete integrable
 systems~{II}, \href{https://doi.org/10.1088/1751-8121/aad1a1}{\textit{J.~Phys.~A: Math. Theor.}} \textbf{51} (2018), 365202,
 13~pages, \href{https://arxiv.org/abs/1711.05886}{arXiv:1711.05886}.

\bibitem{KZQ}
van~der Kamp P.H., Zhang D.-J., Quispel G.R.W., On the relation between the dual
 {AKP} equation and an equation by {K}ing and {S}chief, and its {$N$}-soliton
 solution, \href{https://arxiv.org/abs/1912.02299}{arXiv:1912.02299}.

\bibitem{Via}
Viallet C.M., Integrable lattice maps: {$Q_V$}, a rational version of {$Q_4$},
 \href{https://doi.org/10.1017/S0017089508004874}{\textit{Glasg. Math.~J.}} \textbf{51} (2009), 157--163, \href{https://arxiv.org/abs/0802.0294}{arXiv:0802.0294}.

\bibitem{Walker-PhD}
Walker A.J., Similarity reductions and integrable lattice equations, Ph.D.~Thesis, {U}niversity of Leeds, 2001.

\bibitem{XP}
Xenitidis P.D., Papageorgiou V.G., Symmetries and integrability of discrete
 equations defined on a black-white lattice, \href{https://doi.org/10.1088/1751-8113/42/45/454025}{\textit{J.~Phys.~A: Math. Theor.}}
 \textbf{42} (2009), 454025, 13~pages, \href{https://arxiv.org/abs/0903.3152}{arXiv:0903.3152}.

\bibitem{ZZ-SIGMA-2017}
Zhang D., Zhang D.-J., Rational solutions to the {ABS} list: transformation
 approach, \href{https://doi.org/10.3842/SIGMA.2017.078}{\textit{SIGMA}} \textbf{13} (2017), 078, 24~pages,
 \href{https://arxiv.org/abs/1702.01266}{arXiv:1702.01266}.

\bibitem{DJZ}
Zhang D.-J., The {S}ylvester equation, {C}auchy matrices and matrix discrete
 systems,
 \url{https://www.newton.ac.uk/files/seminar/20130708140014301-153640.pdf}.

\bibitem{ZZ-JNMP-2019}
Zhao S.-L., Zhang D.-J., Rational solutions to {${\rm Q}3_{\delta}$} in the
 {A}dler--{B}obenko--{S}uris list and degenerations, \href{https://doi.org/10.1080/14029251.2019.1544793}{\textit{J.~Nonlinear
 Math. Phys.}} \textbf{26} (2019), 107--132.

\end{thebibliography}
\end{document}